\newcommand{\R}{\mathbb{R}}
\newcommand{\E}{\mathbb{E}}
\renewcommand{\P}{\mathbb{P}}
\newcommand{\I}{\mathcal{I}}
\newcommand{\X}{\mathcal{X}}
\newcommand{\Y}{\mathcal{Y}}
\renewcommand{\L}{\mathcal{L}}
\newcommand{\htu}{\hat\theta_u}
\newcommand{\htr}{\hat\theta_r}
\newcommand{\htl}{\hat\theta_\lambda}
\DeclareMathOperator*{\argmax}{arg\,max}
\DeclareMathOperator*{\argmin}{arg\,min}
\newtheorem{theorem}{Theorem}[section]
\newtheorem{prop}[theorem]{Proposition}
\newtheorem{definition}{Definition}
\newcommand\ack[1]{
	\begingroup\renewcommand\thefootnote{*}\footnote{#1}
	\endgroup
}
\begin{document}

\title{
	From Utilitarian to Rawlsian Designs for Algorithmic Fairness\ack{
		I am grateful to Abraar Chaudhry, Miklos R\'acz, Roberto Rigobon, Matt Salganik, Till S\"{a}nger, and Philip Schnattinger for their comments and suggestions.
	} 
}

\author{
	Daniel E. Rigobon\thanks{
		\href{mailto:drigobon@princeton.edu}{\texttt{drigobon@princeton.edu}}
	}  \\
	\vspace{-0cm}
	{\small {ORFE, Princeton University} } \\
}

\date{\today}

\maketitle

\vspace{-0.8cm}
\begin{center}
	\begin{abstract}
		There is a lack of consensus within the literature as to how `fairness' of algorithmic systems can be measured, and different metrics can often be at odds. In this paper, we approach this task by drawing on the ethical frameworks of utilitarianism and John Rawls. Informally, these two theories of distributive justice measure the `good' as either a population's sum of utility, or worst-off outcomes, respectively. We present a parameterized class of objective functions that interpolates between these two (possibly) conflicting notions of the `good'. This class is shown to represent a relaxation of the Rawlsian `veil of ignorance', and its sequence of optimal solutions converges to both a utilitarian and Rawlsian optimum. Several other properties of this class are studied, including: 1) a relationship to regularized optimization, 2) feasibility of consistent estimation, and 3) algorithmic cost. In several real-world datasets, we compute optimal solutions and construct the tradeoff between utilitarian and Rawlsian notions of the `good'. Empirically, we demonstrate that increasing model complexity can manifest strict improvements to both measures of the `good'. This work suggests that the proper degree of `fairness' can be informed by a designer's preferences over the space of induced utilitarian and Rawlsian `good'.
	\end{abstract}
\end{center}
	
\maketitle

\section{Introduction}
Aspects of societal decision-making have become increasingly outsourced to algorithmic systems -- including criminal risk assessment \cite{Angwin2016Machine}, labor market organization \cite{Chalfin2016Productivity}, provision of medical care \cite{Kleinberg2015Prediction}, and more. The promise of these systems is often that they are more efficient than human decision-makers, and hence are appealing, such as for greater throughput or lower cost. However, under closer investigation, algorithmic systems have been seen to perpetuate or even amplify existing biases \cite{Angwin2016Machine, ONeil2017Weapons, Eubanks2018Automating}. These concerns have brought greater attention towards the design of algorithms that exhibit fairness or other ethical qualities \cite{Kearns2019Ethical, Barocas2021Fairness}.

A large body of existing work defines fairness through particular statistical or mathematical quantities. These measures can be used either as constraints \cite{Dwork2012Fairness, Hardt2016Equality}, or as a penalty imposed for deviating from equality \cite{Berk2017Convex}. Both methodologies take an egalitarian perspective, but one critique of this approach is that equality need not be fair \cite{Cooper2021Emergent}. Furthermore, even if this objection is suppressed, there are many different measures of fairness \cite{Narayanan2018Translation} that can conflict with each other \cite{Kleinberg2016Inherent}. An example of this was seen within the realm of criminal justice, where algorithm designers \cite{Dieterich2016COMPAS} demonstrated equality of one metric, while journalists \cite{Angwin2016Machine} criticized the algorithm's inability to satisfy another. Since the right measure of `fairness' can be unclear, some recent research instead seeks to provide moral and ethical justification behind particular measures \cite{Heidari2019Moral, Hertweck2021Moral}.

Beyond the difficulty of defining and developing a `fair' algorithm, there often exists a tension between the goals of model-builders and model-impacted individuals. For instance, algorithmic decision systems for targeted conditional cash transfer programs can exhibit comparatively greater accuracy than human-based systems, but both were found to yield inter-group inequalities \cite{Noriega-Campero2020Algorithmic}. On one hand, policymakers may view this result positively -- they increased coverage for the needy. On the other hand, those in need may themselves desire an allocation mechanism that is fair, equal, or other such qualities -- even at the expense of policymakers. From this tension has emerged a area of research aiming to understand a tradeoff between fairness and accuracy \cite{Diana2021Minimax, Little2022Fairness, Liang2022Algorithmic}. In particular, these studies adopt the perspective that fairness is \textit{inherently} at odds with accuracy, which is not trivially true.\footnote{See \citet{Cooper2021Emergent} for a more in-depth critique of common approaches to the `fairness-accuracy' tradeoff.}

Instead, this paper identifies and formalizes a tradeoff in algorithmic design between two different ethical frameworks of distributive justice: Utilitarian \cite{Bentham1996Introduction, Mill2008Utilitarianism} and Rawlsian \cite{Rawls2003Theory}. More precisely, we present a class of objective functions that interpolates between the preferences of a utilitarian and Rawlsian designer.

In many algorithmic settings, a model is tasked with distributing some quantity of predictive loss throughout a population. Each of these two approaches to distributive justice can be used to determine which model's allocation of loss is most `good'. The utilitarian paradigm is often associated with accuracy (as opposed to fairness), but this need not be the case. If, for example, a utilitarian believed that each individuals' disutility is proportional to their model-induced squared error, then they would argue that the most accurate (with respect to mean squared error) model is also maximally `good'. In doing so, moreover, this utilitarian designer weighed the needs of all individuals equally -- could this not be `fair'? It is therefore critical to emphasize that such statements about what `fairness' is (or is not) must therefore reflect a contextual acceptance (or rejection) of particular ethical frameworks. Namely, the assertion of a `fairness-accuracy' tradeoff requires that either: 1) utility is not tied to accuracy or 2) a utilitarian approach to the problem is unfair.

We make no such assertion in this work. A utilitarian designer will define `good' to be the sum of population utility, whereas a Rawlsian designer will measure `good' through the outcomes of a population's least advantaged.\footnote{In Section~\ref{sec:model} we will discuss these two theories and their implied objective functions in more detail.} The objective functions in this paper therefore arguably reflect a `fairness-fairness' tradeoff -- or to be precise, a `Utilitarian good-Rawlsian good' tradeoff. In part, this tradeoff is valuable to understand because each ethical framework addresses a common critique of the other. A utilitarian can be indifferent towards inequality, whereas a Rawlsian's greatest concern is the most needy. Conversely, while a Rawlsian designer is unconcerned with the preferences of the majority, a utilitarian weights all individuals' preferences equally. This paper's approach allows us to partially address the shortcomings of each framework while leveraging their advantages.

Our main contributions are threefold. First, we conceptualize a class of objective functions and show that they capture a relaxation of Rawls's `original position'. This result exhibits close ties to social welfare and risk aversion. Second, we study convergence properties of the objective functions and their minimzers. These technical results verify that we are indeed interpolating between: 1) utilitarian and Rawlsian measures of `good', and 2) their most desirable outcomes. Finally, our experiments show the tradeoff between these two measures on several common datasets, and demonstrate how this tradeoff is influenced by model complexity. In particular, a designer's preferences (over bundles of Rawlsian and utilitarian `good') can be used to determine their desired point along this tradeoff. In these experiments, we also study group-averaged loss, and see that an egalitarian approach may be significantly at odds with Rawlsian principles.

The rest of the paper is organized as follows. Section~\ref{sec:lit} reviews the most relevant and recent work. Section~\ref{sec:model} presents the learning problem and objective functions for utilitarian and Rawlsian designers. Section~\ref{sec:continuum} contains our main conceptual and theoretical results, where we introduce a class of objective functions and study its properties. Section~\ref{sec:experiments} trains various models on real-world datasets and studies several aspects of their performance. Finally, Section~\ref{sec:discussion} concludes and presents directions for future work.

\subsection{Relevant Literature}
\label{sec:lit}
There are several areas of work related to this paper, each of which we present here. However, we introduce and discuss the relevant ethical theories in Section~\ref{sec:model}.

A significant branch of literature seeks to measure fairness through mathematical or statistical measures. These works address fairness by imposing constraints or penalties based on these measures during the in-processing stage of model building, e.g. in \cite{Dwork2012Fairness, Hardt2016Equality,Berk2017Convex, Corbett-Davies2017Algorithmic}. However, there is not a universally agreed upon measure of fairness. Moreover, such formal criteria for fairness can conflict \cite{Kleinberg2016Inherent}, yield to long-term damage \cite{Liu2018Delayed} or are subject to fundamental statistical limitations \cite{Corbett-Davies2018Measure}. As a response to these challenges, recent work has grounded particular measures of fairness in moral and ethical arguments \cite{Heidari2019Moral, Hertweck2021Moral}. The greatest similarity between this area of work and our paper is a shared approach to fairness through distinct theories of ethical `good'.

Rawls's framework has appeared in computer science literature through minimax fairness \cite{Heidari2018Fairness, Martinez2020Minimax, Lahoti2020Fairness, Diana2021Minimax, Papadaki2022Minimax, Yang2022Minimax, Little2022Fairness}. These papers focus largely on \textit{group} minimax fairness. Instead, we study \textit{individual} minimax fairness through a relaxation of the Rawlsian `original position'. Two comparative advantages of our approach are: 1) avoiding any danger of fairness gerrymandering (see \cite{Kearns2018Preventing} for another solution to this issue), and 2) no requirement to be given group labels (see \cite{Hashimoto2018Fairness, Lahoti2020Fairness} for other such approaches). The most similar paper to our own is \cite{Heidari2018Fairness}, where the authors use a closely related social welfare function to constrain an accuracy-maximizing optimization problem. However, we do not consider accuracy to be the fundamental objective -- instead focusing on maximizing social welfare itself. Finally, we note that a different principle from Rawls's theory of justice has been studied by \citet{Liu2021RAWLSNET}, who provide techniques for imposing fair equality of opportunity on Bayesian graphical models.

Social choice theory and welfare economics have also been influenced by Rawls's principles \cite{Sen1976Welfare, Hammond1976Equity, DAspremont1977Equity}. However, differing views in these areas can argue that only utilitarian designs are possible \cite{Maskin1978Theorem} or rational \cite{Harsanyi1975Nonlinear}. We note that the class of social welfare functions that appear in this paper and \cite{Heidari2018Fairness} are justified axiomatically by \citet{Roberts1980Interpersonal}. As a result, recent approaches to fairness in machine learning relying on notions of social welfare \cite{Rambachan2020Economic, Rambachan2021Economic} are closely related to our work. Namely, this paper's approach can be interpreted as a planner aiming to maximize social welfare for a particular class of functions. However, in contrast to \citet{Rambachan2021Economic}, the social welfare functions in this paper do not require group-specific weights to be given a priori, and instead rely on a designer's degree of risk aversion from the Rawlsian `original position'.

Minimax optimizations (or variants thereof) have been well-studied for their robustness qualities. For example, distributionally-robust optimization (DRO) problems can significantly improve predictive outcomes for underrepresented groups \cite{Hashimoto2018Fairness,Sagawa2020Distributionally, Li2021Evaluating}. In addition, \citet{Lahoti2020Fairness} use a variant of DRO that adversarially weights observations during the learning process to improve the performance of worst-off groups, which is closely tied to the Rawlsian notion of `good'. Most related to our paper is the identical relaxation of minimax optimization known as `Tilted Empirical Risk Minimization' (TERM), proposed by \citet{Li2021Tilted}. The authors study properties of both the loss function and its optimal solutions under the assumption of generalized linear models. One core difference is conceptual -- we focus on representing these objective functions as relaxations of the Rawlsian veil of ignorance, incorporating features of risk aversion. Technically, we rigorously prove a stronger convergence property for the optimal solutions. Finally, in our experiments we study the impact of increasing model complexity on the tradeoff between two notions of the `good'.

Finally, a number of papers in the literature aim to characterize a `fairness-accuracy' tradeoff \cite{Liang2022Algorithmic, Little2022Fairness}. We note that this tradeoff also appears in several fairness-constrained approaches \cite{Corbett-Davies2017Algorithmic, Diana2021Minimax}. \citet{Cooper2021Emergent} present a critique of these studies, questioning the assumptions that fairness and accuracy are at odds, that equality is fair, and more. We empirically study the existence of a similar tradeoff, but arguing that it instead reflects a balance between utilitarian and Rawlsian measures of `good'. In addition, we observe how the tradeoff is affected by changes to model complexity, which to the best of our knowledge, has not been previously studied.

\section{Modeling and Ethical Frameworks}
\label{sec:model}
In this section, we describe a general supervised learning setting, and two theories of distributive justice that can be used to perform model selection.

Let $\{x_i, y_i\}_{i=1...n}$ denote a set of observations, where $x_i \in \X$ are features and $y_i \in \Y$ is a target. In simple classification settings, $\Y = \{0,1\}$. A set of candidate models is defined by $f_\theta: \X \to \Y$ for each $\theta$ in parameter space $\Theta$. Finally, let $\ell \left( f_\theta(x_i), y_i \right)$ denote the loss incurred by model $f_\theta$ on observation $i$. For simplicity of notation, we occasionally write $\ell_i(\theta) = \ell(f_\theta(x_i), y_i)$. The loss function $\ell$ is assumed to be primitive (i.e. given), and we will assume that $\ell(f_\theta(x_i), y_i)$ represents the disutility experienced by an individual with characteristics $x_i, y_i$ under model $f_\theta$.

\paragraph{Utilitarian}
A utilitarian designer, aligning with the political philosophy of John Stuart Mill \cite{Mill2008Utilitarianism} and Jeremy Bentham \cite{Bentham1996Introduction}, would seek to minimize the sum of population disutility -- equivalently maximizing total utility.\footnote{A utilitarian need not define `utility' as equal or proportional to negative loss. However, in principle $\ell$ can represent any desired metric of damage. We need only that this same metric applies to universally to all individuals. The construction of such metrics is far beyond the scope of this paper, and largely driven by each model's application domain.} A fundamental feature of utilitarian ethics is that it can justify hurting one or more individuals if, in doing so, others in the population are sufficiently compensated. Utilitarianism also exhibits the valuable property that all individuals' needs are equally important. However, it is blind to higher-order characteristics of the distribution of utilities -- i.e. large increases to its variance are justified in the name of an infinitesimal increase to its mean. As a result, it has been criticized for its indifference towards inequality \cite{Sen1979Equality}. In this paper, a \textit{utilitarian} designer solves:
\begin{equation}
	\label{opt:utilitarian}
	\min_{\theta\in\Theta} \sum_i \ell( f_\theta(x_i), y_i ),
\end{equation}
whose minimizer is given by $\htu$. In many common machine learning examples, the objective function may equal $n^{-1}\sum_i (y_i - f_\theta(x_i))^2$, which corresponds to a utilitarian designer with disutility equal to squared loss $(y_i - f_\theta(x_i))^2$.

\paragraph{Rawlsian}
Another possible approach comes from the thought experiment and philosophy of John Rawls \cite{Rawls2003Theory}. For completeness, we briefly state a few of his main points. First, Rawls presents the `original position' (also referred to as the `veil of ignorance') -- wherein individuals do not know their place in society, talents, or even notions of what entails a good life. From such a position, he argues that a rational individual would desire that ``inequalities are to be arranged so that they are [...] to the greatest benefit of the least advantaged'' \cite{Rawls2003Theory}.\footnote{A comprehensive summary of Rawls's philosophy is far beyond the scope of this paper. However, we note that Rawls prioritizes two other principles before the one mentioned here. First, that all individuals are entitled to the greatest possible set of individual liberties. This principle retains a minimax flavor -- if the individual with least liberties agrees to a particular organization of society, then behind the veil of ignorance, all others would necessarily agree. A second principle is that offices yielding any inequalities must be equally accessible to all (i.e. equality of opportunity). The latter principle has featured in several recent studies \cite{Hardt2016Equality, Liu2021RAWLSNET}. \label{foot:rawls}} Termed the `difference principle', this minimax approach is desirable through its ability to address utilitarianism's indifference to inequality. However, it is an extremely strict paradigm, and is largely unconcerned with the majority of a population. In the context of this paper, a \textit{Rawlsian} designer would aim to solve:
\begin{equation}
	\label{opt:rawlsian}
	\min_{\theta\in\Theta} \max_i \ell(f_\theta(x_i), y_i),
\end{equation}
whose minimzer is denoted $\htr$. An immediate observation of problem~\eqref{opt:rawlsian} is that the objective function is non-differentiable, and therefore can be difficult to solve in practice. In addition, a Rawlsian designer's optimal model necessarily satisfies one of two criterion: 1) if the maximization in \eqref{opt:rawlsian} has a unique maximizer $i$, then the model has reached the fundamental limit of predictability for observation $i$, 2) if there are multiple maximizers, it is impossible to reduce the loss for one of these without increasing the loss of another in doing so. Effectively, this means the optimal model is agnostic to any easily-predictable observations. A Rawlsian designer therefore views outliers in a fundamentally different manner from traditional data scientists -- outliers represent their assessment of good, and are not noise to be discarded. We also note that problem \eqref{opt:rawlsian} is also closely tied to robust control design in engineering \citep[Chapter 14]{Kemin1998Essentials} and optimization under ambiguity in economics \cite{Gilboa1989Maxmin} -- the latter of which is reminiscent of the original position.

We remark that our presentation of Rawls's philosophy is greatly simplified. In order to justify that the solution to \eqref{opt:rawlsian} is `fair', Rawls would first require that principles be satisfied: 1) basic liberties are guaranteed and 2) offices carrying inequalities are open and equally accessible to all (see footnote~\ref{foot:rawls}). The latter principle on fair equality of opportunity has appeared in several recent papers \cite{Hardt2016Equality, Heidari2019Moral, Liu2021RAWLSNET}. While in this paper we assume that both principles hold, systemic inequalities in society would suggest that this need not be the case. A more complete integration of Rawls' principles into the design and implementation of algorithmic systems remains a rich area for future work. It is also important to note that \eqref{opt:rawlsian} reflects the result of applying Rawls's difference principle to a \textit{relaxed} original position. Individuals must at least know their notion of good (i.e. the function $\ell$), but still be unaware of all other characteristics (i.e. covariates $x$ and target $y$). It is a much deeper question to study if such a relaxation still yields the same principles of justice.

Much of the literature on algorithmic fairness is interested in \textit{group} measures of fairness. Let the given observations be partitioned into groups $G_1,...,G_m$, which may not be mutually exclusive. Applying a minimax approach to average group loss would give:
\begin{equation}
	\label{opt:grp_rawlsian}
	\min_{\theta\in\Theta} \max_{j} \frac{1}{|G_j|} \sum_{i \in G_j} \ell(f_\theta(x_i), y_i).
\end{equation}
Notice that this is an inter-group Rawlsian paradigm coupled with intra-group utilitarianism.

It is not immediately clear which of \eqref{opt:rawlsian} or \eqref{opt:grp_rawlsian} is preferred. Indeed, there is a contentious debate in the literature between \textit{individual} and \textit{group} fairness.\footnote{See \citet{Dwork2012Fairness} and \citet{Sharifi-Malvajerdi2019Average} for examples of individual fairness, or \citet{Hardt2016Equality} and \citet{Diana2021Minimax} for group fairness. Also see \citet{Kearns2018Preventing} on mixing both individual and group notions of fairness.} Individual fairness represents a limiting case of group fairness, but it can generalize poorly and be difficult to measure. Conversely, group fairness can fail to account for intra-group differences in outcomes, leading to so-called `fairness gerrymandering' \cite{Kearns2018Preventing}.

We do not aim to resolve this debate, only to argue that the individualized approach in \eqref{opt:rawlsian} is closer to reflecting Rawls's original position than \eqref{opt:grp_rawlsian}. In practice it is impossible to perfectly manifest Rawls's original position -- recall that the individual-driven fairness of \eqref{opt:rawlsian} is still a relaxation of the true veil of ignorance. An individual merely being within the sample may reflect certainty about some of their characteristics, e.g. that they are applying for a low-paying job, high-interest loan, or have been previously incarcerated. However, they remain uncertain of their characteristics within the sample -- including group membership and the distribution of characteristics within each group. Now, we can instead imagine a different relaxation of the original position that is related to the group-wise approach of \eqref{opt:grp_rawlsian}. Here, the group-conditional distributions of covariates $X,Y$ must be known, while only group membership is uncertain. Individuals in this new position face strictly less uncertainty than before, and hence the veil of ignorance is more transparent. Although this paper focuses on individual fairness, we also empirically study the effects on groups.

\section{Utilitarian-Rawlsian Continuum}
\label{sec:continuum}
Ultimately, the approach of both utilitarian and Rawlsian designers can have shortcomings. In the main conceptual contribution of this paper, we define a set of objective functions that interpolates between these two seemingly conflicting paradigms.

\begin{definition}[Utilitarian-Rawlsian Objective, $L(\theta; \lambda)$]
	For any $\lambda \in (0,\infty)$, we define:
	\begin{equation}
		\label{eq:continuum}
		L(\theta;\lambda) = \frac{1}{\lambda} \log\left( \frac{1}{n} \sum_i e^{\lambda \ell(f_\theta(x_i), y_i)} \right).
	\end{equation}
\end{definition}

The optimization problem associated with this objective is therefore:
\begin{equation}
	\label{opt:continuum}
	 \min_{\theta \in \Theta} L(\theta;\lambda) = \min_{\theta\in\Theta} \frac{1}{\lambda} \log\left( \frac{1}{n} \sum_i e^{\lambda \ell(f_\theta(x_i), y_i)} \right),
\end{equation}
whose optimal solution is $\htl$. Main theoretical results in Section~\ref{sec:main_prop} show that problem \eqref{opt:continuum} is a continuous relaxation between \eqref{opt:utilitarian} and \eqref{opt:rawlsian} -- which is studied through both the Rawlsian original position, and convergence properties of both \eqref{eq:continuum} and its optimal solutions. Interesting connections to a regularized fairness approach and further properties of \eqref{opt:continuum} are briefly presented in Section~\ref{sec:further_prop}.

For completeness, we include the following analogous relaxation of the group-wise minimax approach in \eqref{opt:grp_rawlsian}.
\begin{equation}
	\label{opt:grp_continuum}
	\min_{\theta\in\Theta} \frac{1}{\lambda} \log\left( \frac{1}{m} \sum_j e^{\lambda |G_j|^{-1} \sum_{i \in G_j} \ell(f_\theta(x_i), y_i)} \right).
\end{equation}
However, as previously justified, the remainder of this paper focuses exclusively on $L(\theta; \lambda)$ and problem \eqref{opt:continuum}.

\subsection{Characterization}
\label{sec:main_prop}
This section contains our main theoretical results. We present an interpretation of $L(\theta; \lambda)$ and problem \eqref{opt:continuum} that reflects a weakened notion of the Rawlsian veil of ignorance, and note important connections to social welfare maximization and risk aversion. Next, we study convergence properties of both $L(\theta; \lambda)$ and its minimizers $\htl$ . Both results together verify that we are indeed representing a continuum of objective functions between utilitarian and Rawlsian designs. Finally, we conclude by briefly analyzing a simple setting -- univariate linear regression.

\paragraph{Relaxed Veil of Ignorance and Welfare}
First, we show that problem \eqref{opt:continuum} captures a natural relaxation of Rawls's  original position. Consider an individual who is randomly assigned covariates $X$ and `true' target $Y$ according to $X, Y \sim \mathrm{Unif}(\{(x_1,y_1),...,(x_n,y_n)\})$. In each state of the world $\theta$, she observes some random loss $\ell(f_\theta(X), Y)$. If this loss carries disutility proportional to $e^{\lambda \ell(f_\theta(X), Y)}$, then it is possible to see that:
\begin{equation}
	\frac{1}{n}\sum_i e^{\lambda \ell(f_\theta(x_i), y_i)} = \E_{X,Y}\left[ e^{\lambda \ell(f_\theta(X),Y)} \right].
\end{equation}
In this context, the solution to problem \eqref{opt:continuum} is equivalently minimizing expected disutility of loss for an individual with constant absolute risk aversion $\lambda$. Informally, $\lambda$ captures the degree to which she dislikes uncertainty in the distribution of $\ell(f_\theta(X),Y)$. This connection is seen in the following Proposition, which is presented without proof.

\begin{prop}
	\label{prop:exp_util}
	Let $u_\lambda(\ell_i(\theta)) = - e^{ \lambda \ell(f_\theta(x_i), y_i)}$ denote the utility function of individual $i$ corresponding to model $f_\theta$. Then:
	\begin{equation}
		\label{eq:exp_util}
		\argmin_{\theta \in \Theta} L(\theta; \lambda) = \argmax_{\theta \in \Theta} \E_{i \sim \mathrm{Unif}[1...n]} \left[ u_\lambda(\ell_i(\theta)) \right].
	\end{equation}
\end{prop}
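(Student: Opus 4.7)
The plan is to reduce the claim to the fact that applying a strictly monotone transformation to an objective does not change its set of optimizers. The only substantive computation is identifying the expectation on the right-hand side with the sum inside the logarithm on the left, which is immediate from the definition of the uniform distribution on $\{1,\ldots,n\}$.

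First I would write
\begin{equation*}
\E_{i \sim \mathrm{Unif}[1\ldots n]}\left[ u_\lambda(\ell_i(\theta)) \right] = -\frac{1}{n}\sum_{i=1}^n e^{\lambda \ell(f_\theta(x_i), y_i)},
\end{equation*}
so that maximizing this expectation over $\theta \in \Theta$ is equivalent to minimizing $\frac{1}{n}\sum_i e^{\lambda \ell(f_\theta(x_i), y_i)}$. Then I would observe that the map $t \mapsto \tfrac{1}{\lambda}\log(t)$ is strictly increasing on $(0,\infty)$ (recall $\lambda > 0$), and that $\tfrac{1}{n}\sum_i e^{\lambda \ell_i(\theta)} > 0$ always. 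Hence the argmin over $\theta$ of $\tfrac{1}{n}\sum_i e^{\lambda \ell_i(\theta)}$ coincides with the argmin over $\theta$ of $L(\theta;\lambda) = \tfrac{1}{\lambda}\log\bigl(\tfrac{1}{n}\sum_i e^{\lambda \ell_i(\theta)}\bigr)$.

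Chaining these two equivalences gives
\begin{equation*}
\argmax_{\theta \in \Theta} \E_{i}\left[ u_\lambda(\ell_i(\theta)) \right] = \argmin_{\theta \in \Theta} \frac{1}{n}\sum_i e^{\lambda \ell_i(\theta)} = \argmin_{\theta \in \Theta} L(\theta; \lambda),
\end{equation*}
which is precisely \eqref{eq:exp_util}. There is no real obstacle here; the statement is essentially a definitional unpacking, and the only subtlety worth flagging explicitly is that both transformations used (negation, and $\tfrac{1}{\lambda}\log$ with $\lambda > 0$) preserve argmin/argmax in the appropriate direction, so the equality of argsets is exact rather than merely up to a tie-breaking convention.
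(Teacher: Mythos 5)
Your argument is correct and is exactly the routine monotone-transformation argument the paper has in mind; indeed, the paper states Proposition~\ref{prop:exp_util} ``without proof'' precisely because the claim reduces to rewriting the expectation as $-\tfrac{1}{n}\sum_i e^{\lambda \ell_i(\theta)}$ and noting that negation and $t \mapsto \tfrac{1}{\lambda}\log t$ (with $\lambda>0$) preserve the optimizer sets. Nothing is missing.
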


Notice that the connection established in Proposition~\ref{prop:exp_util} implies that the function $L(\theta; \lambda)$ is \textit{effectively} utilitarian -- up to a monotone transformation, it is proportional to the total utility in the population. However, it is utilitarian with respect to a particular measure of utility -- not loss itself. Since $u_\lambda$ is a non-linear function of loss, the optimal solution to \eqref{eq:exp_util} does not coincide with the utilitarian optimum.

Another observation made regarding the right-hand side of \eqref{eq:exp_util}, drawn from social welfare and choice theory, is its independence of common \textit{level}. Namely, for any constant $\beta$, it is easy to see that
\begin{equation}
	\label{eq:ind_level}
	\sum_i u_\lambda(\ell_i(\theta_1)) \ge \sum_i u_\lambda(\ell_i(\theta_2)) 
	\iff
	\sum_i u_\lambda(\ell_i(\theta_1) + \beta) \ge \sum_i u_\lambda(\ell_i(\theta_2) + \beta) .
\end{equation}
As a consequence of this property, only absolute differences between individual losses affect the optimal solution. However, \eqref{eq:ind_level} would not hold if instead of adding, we multiplied all losses by some $\beta > 0$. This means that the expectation in \eqref{eq:exp_util} does not exhibit independence of common \textit{scale}. To satisfy this latter property, it would be necessary to use a different form of $u_\lambda(\cdot)$ -- see \cite{Heidari2018Fairness} for functions satisfying only independence of common scale, or \cite{Harsanyi1955Cardinal, Maskin1978Theorem} for those satisfying both independence of common scale and level. Furthermore, from the axiomatic characterization of \cite{Roberts1980Interpersonal}, we can see that $L(\theta; \lambda)$ is a monotone transformation of a particular class of social welfare functions. As a result, problem \eqref{opt:continuum} represents a social welfare-maximizing approach towards algorithmic fairness.


\paragraph{Convergence}
We now turn to the main technical results of this paper. For limiting values of $\lambda$, we study the behavior of $L(\theta; \lambda)$ and the optimal solutions to problem \eqref{opt:continuum}.

As $\lambda \to \infty$, the sum in \eqref{eq:continuum} is dominated by the observation with maximum loss, and hence approaches the Rawlsian minimax objective in \eqref{opt:rawlsian}. Conversely, as $\lambda \to 0$ the exponential is approximately linear in its argument, which leads directly to the utilitarian objective of \eqref{opt:utilitarian}. The following result shows that for any $\theta$, $L(\theta; \lambda)$ indeed satisfies these properties.

\begin{prop}
	\label{prop:pointwise_convergence}
	For all $\theta \in \Theta$: 
	\begin{alignat*}{3}
		\lim_{\lambda \to 0} & L(\theta; \lambda)&&=\frac{1}{n} \sum_i \ell(f_\theta(x_i), y_i) \\
		\lim_{\lambda \to \infty} & L(\theta; \lambda)&&=\max_i \ell(f_\theta(x_i), y_i).
	\end{alignat*}
\end{prop}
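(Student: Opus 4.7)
The plan is to handle the two limits separately using standard log-sum-exp manipulations, treating the losses $\ell_i := \ell(f_\theta(x_i), y_i)$ as fixed real numbers (since $\theta$ is fixed and $n$ is finite, these are bounded). Throughout, set $\bar\ell := \frac{1}{n}\sum_i \ell_i$ and $\ell^* := \max_i \ell_i$. Both limits reduce to well-behaved asymptotic expansions, so I do not anticipate a serious obstacle; the main thing to be careful about is justifying the error-term bounds.

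For the $\lambda \to 0$ limit, I would Taylor expand. Since $\ell_i$ are bounded, for small $\lambda$ we have $e^{\lambda \ell_i} = 1 + \lambda \ell_i + O(\lambda^2)$ uniformly in $i$. Averaging gives $\frac{1}{n}\sum_i e^{\lambda \ell_i} = 1 + \lambda \bar\ell + O(\lambda^2)$, and then $\log(1 + \lambda \bar\ell + O(\lambda^2)) = \lambda \bar\ell + O(\lambda^2)$. Dividing by $\lambda$ yields $L(\theta;\lambda) = \bar\ell + O(\lambda) \to \bar\ell$, which matches the utilitarian objective (up to the factor of $1/n$, which is immaterial for the argmin but matters for the stated limit). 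Equivalently, one could invoke L'Hôpital's rule on $\log\bigl(\frac{1}{n}\sum_i e^{\lambda \ell_i}\bigr)/\lambda$ at $\lambda = 0$, differentiating numerator and denominator; the derivative at $\lambda = 0$ is exactly $\bar\ell$.

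For the $\lambda \to \infty$ limit, I would factor out the dominant exponential. Writing
\begin{equation*}
\frac{1}{n}\sum_i e^{\lambda \ell_i} = e^{\lambda \ell^*} \cdot \frac{1}{n}\sum_i e^{\lambda(\ell_i - \ell^*)},
\end{equation*}
taking logs and dividing by $\lambda$ gives
\begin{equation*}
L(\theta;\lambda) = \ell^* + \frac{1}{\lambda} \log\left( \frac{1}{n}\sum_i e^{\lambda(\ell_i - \ell^*)} \right).
\end{equation*}
Every term in the remaining sum satisfies $0 < e^{\lambda(\ell_i - \ell^*)} \le 1$, with at least one term equal to $1$, so the sum lies in $[1/n, 1]$ and its logarithm lies in $[-\log n, 0]$. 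Dividing by $\lambda$ sends the correction term to $0$, yielding $L(\theta;\lambda) \to \ell^*$.

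The only subtlety worth flagging is ensuring these expansions are legitimate: for the $\lambda \to 0$ case, uniformity of the Taylor remainder across the finite collection $\{\ell_i\}$ is immediate from boundedness; for the $\lambda \to \infty$ case, the bracketing argument gives an elementary squeeze without needing any differentiability. Neither step requires assumptions on $\ell$ or $f_\theta$ beyond the losses being finite real numbers, which is consistent with the generality of the proposition's statement.
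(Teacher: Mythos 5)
Your proof is correct. For the $\lambda \to 0$ limit you take essentially the paper's route: the Taylor expansion $e^{\lambda \ell_i} = 1 + \lambda \ell_i + O(\lambda^2)$ and the L'H\^{o}pital computation you mention as an alternative are interchangeable here, and the paper uses the latter. For the $\lambda \to \infty$ limit you genuinely diverge from the paper, and in a way that buys something. The paper again invokes L'H\^{o}pital and is then left analyzing the limit of the softmax-weighted average $\sum_i \frac{e^{\lambda \ell_i}}{\sum_j e^{\lambda \ell_j}} \ell_i$, which requires separately tracking the set of maximizers $\I$, showing the weights of non-maximizers vanish, and showing the surviving weights equalize to $1/|\I|$. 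Your factorization $\frac{1}{n}\sum_i e^{\lambda \ell_i} = e^{\lambda \ell^*} \cdot \frac{1}{n}\sum_i e^{\lambda(\ell_i - \ell^*)}$ followed by the bracketing $\frac{1}{n} \le \frac{1}{n}\sum_i e^{\lambda(\ell_i - \ell^*)} \le 1$ sidesteps all of that: it needs no differentiation, no case analysis on the multiplicity of the maximum, and it delivers the explicit nonasymptotic bound $\ell^* - \frac{\log n}{\lambda} \le L(\theta;\lambda) \le \ell^*$, which is strictly more information than the paper's qualitative limit (and is the standard sharp bound for log-sum-exp). Your parenthetical about the $1/n$ factor is also correctly resolved: the stated limit in the proposition is the average loss, which is exactly your $\bar\ell$, so there is no discrepancy to worry about.
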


The proof is found in Appendix~\ref{app:pfs}. Although simple, this result on pointwise convergence verifies that at small (resp. large) values of $\lambda$, the objective function in problem \eqref{opt:continuum} behaves exactly like that of \eqref{opt:utilitarian} (resp. \eqref{opt:rawlsian}). Therefore, it is interpolating between utilitarian and Rawlsian measures of good.

In fact, it is possible to show that $L(\theta; \lambda)$ exhibits a stronger form of convergence, which can yield convergence of its minimizers. This is formalized in the following main result.

\begin{theorem}
	\label{thm:argmin_convergence}
	Let $\htl$ be the optimal solution to \eqref{opt:continuum}. If $\Y$ is compact, the set $\{ f_\theta(x), \, \theta \in \Theta\}$ is compact for all $x \in \X $, and $\ell(\cdot, \cdot)$ is continuous, then:
	\begin{equation}
		\begin{alignedat}{3}
			\lim_{\lambda \to 0} & \htl &&\in \argmin_{\theta \in \Theta} \frac{1}{n}\sum_i \ell(f_\theta(x_i), y_i) \\
			\lim_{\lambda \to \infty} & \htl &&\in \argmin_{\theta \in \Theta} \max_i \ell(f_\theta(x_i), y_i).
		\end{alignedat}
	\end{equation}
	In addition, if the minimizers on the right-hand side are unique (denoted $\htu$ and $\htr$), then:
	\begin{equation}
		\begin{alignedat}{3}
			\lim_{\lambda\to0} &\htl && = \htu \\
			\lim_{\lambda\to\infty} &\htl && = \htr.
		\end{alignedat}
	\end{equation}
\end{theorem}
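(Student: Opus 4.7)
The plan is to upgrade the pointwise convergence of Proposition~\ref{prop:pointwise_convergence} to \emph{uniform} convergence in $\theta$, and then apply a standard variational argument: if $L(\cdot;\lambda) \to L_0(\cdot)$ uniformly and each $\htl$ minimizes $L(\cdot;\lambda)$, any accumulation point of $\htl$ minimizes $L_0$. The uniqueness hypothesis then upgrades subsequential convergence to convergence of the full family $\{\htl\}$.

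For the uniform bounds, the hypotheses first yield a constant $M < \infty$ with $\ell_i(\theta) \in [0, M]$ for every $\theta \in \Theta$ and every $i$, since $\ell$ is continuous on each compact set $\{f_\theta(x_i):\theta\in\Theta\} \times \{y_i\}$. The regime $\lambda \to \infty$ then follows from the sandwich
\[
\max_i \ell_i(\theta) - \tfrac{\log n}{\lambda} \;\le\; L(\theta;\lambda) \;\le\; \max_i \ell_i(\theta),
\]
obtained by bounding $\tfrac{1}{n}\sum_i e^{\lambda \ell_i(\theta)}$ between $\tfrac{1}{n} e^{\lambda \max_i \ell_i(\theta)}$ and $e^{\lambda \max_i \ell_i(\theta)}$ and taking logs. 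The regime $\lambda \to 0$ follows from a Taylor expansion: using $\ell_i \le M$,
\[
\tfrac{1}{n}\sum_i e^{\lambda \ell_i(\theta)} \;=\; 1 + \lambda\, \bar\ell(\theta) + R(\theta,\lambda), \qquad |R(\theta,\lambda)| \;\le\; \tfrac{1}{2}\lambda^2 M^2 e^{\lambda M},
\]
after which $L(\theta;\lambda) = \tfrac{1}{\lambda}\log\bigl(1+\lambda\bar\ell(\theta)+R\bigr) = \bar\ell(\theta) + O(\lambda)$, uniformly in $\theta$.

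With uniform convergence in hand, let $L_0$ denote the relevant limiting objective and $\varepsilon_\lambda = \sup_\theta |L(\theta;\lambda) - L_0(\theta)| \to 0$. Optimality of $\htl$ gives, for every $\theta \in \Theta$,
\[
L_0(\htl) \;\le\; L(\htl;\lambda) + \varepsilon_\lambda \;\le\; L(\theta;\lambda) + \varepsilon_\lambda \;\le\; L_0(\theta) + 2\varepsilon_\lambda,
\]
so $L_0(\htl) \to \inf_\theta L_0(\theta)$. Since the induced loss vectors $(\ell_1(\htl),\ldots,\ell_n(\htl))$ lie in the compact cube $[0,M]^n$, any sequence $\lambda_k$ admits a subsequence along which these vectors converge to some $v^\ast \in [0,M]^n$; continuity of $L_0$ as a function of the loss vector forces $v^\ast$ to realize $\inf L_0$, which is the first statement of the theorem. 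Under the uniqueness assumption, every subsequential limit equals $\htu$ (resp.\ $\htr$), so the full family $\htl$ converges to that unique minimizer as $\lambda \to 0$ (resp.\ $\lambda \to \infty$).

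The main obstacle I foresee is that $\Theta$ itself is \emph{not} assumed compact, so $\htl$ need not possess convergent subsequences in $\Theta$. The resolution is to perform the compactness/subsequence extraction in the loss-vector image $\{(\ell_1(\theta),\ldots,\ell_n(\theta)) : \theta\in\Theta\} \subset [0,M]^n$, which is bounded by the hypotheses; in the uniqueness half of the theorem the limiting loss vector is realized by the unique $\htu$ or $\htr$, while in the non-unique half the stated inclusion $\lim_\lambda \htl \in \argmin(\cdot)$ must be read as ``every subsequential limit is a minimizer'' rather than as asserting existence of a single limit.
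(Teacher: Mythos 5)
Your proposal is correct in substance but follows a genuinely different route from the paper. The paper works entirely in the space $\L$ of feasible loss profiles, establishes \emph{monotonicity} in $\lambda$ of two auxiliary functionals ($F_\lambda$ for $\lambda\to0$, $G_\lambda$ for $\lambda\to\infty$), invokes Dini's theorem to upgrade pointwise to uniform convergence on the compact set $\L$, and then passes through the machinery of $\Gamma$-convergence plus equi-coercivity to conclude convergence of minimizers. You instead prove uniform convergence \emph{directly and quantitatively}: the sandwich $\max_i \ell_i(\theta) - \lambda^{-1}\log n \le L(\theta;\lambda) \le \max_i \ell_i(\theta)$ gives an explicit uniform rate for $\lambda\to\infty$, and the Taylor bound gives $L(\theta;\lambda)=\bar\ell(\theta)+O(\lambda)$ uniformly for $\lambda\to0$; you then run the elementary $2\varepsilon_\lambda$ variational argument and extract convergent subsequences of loss vectors in $[0,M]^n$. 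This is essentially a self-contained proof of the special case of the fundamental theorem of $\Gamma$-convergence that you actually need, and it buys explicit error bounds (e.g.\ the $\log n/\lambda$ gap) that the paper's monotonicity-plus-Dini argument does not produce; the paper's route, in exchange, generalizes more readily to settings where uniform rates are unavailable. You also correctly identify and resolve the same obstacle the paper faces --- $\Theta$ is not assumed compact, so compactness must be harvested in loss-profile space --- in the same way the paper does.

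One point deserves sharpening. Your subsequential limit $v^\ast$ lives in the cube $[0,M]^n$, but for the conclusion ``$v^\ast$ realizes $\inf L_0$ over the feasible set'' to translate into membership in $\argmin_{\theta\in\Theta}$, you need $v^\ast$ to lie in the image $\L=\{(\ell_1(\theta),\ldots,\ell_n(\theta)):\theta\in\Theta\}$, i.e.\ you need $\L$ to be \emph{closed}, not merely bounded. The paper asserts compactness of $\L$ from the stated hypotheses and leans on it for both the $\Gamma$-convergence step and the equi-coercivity step; your write-up only uses boundedness. This is a small gap, and it is inherited from (and shared with) the paper's own treatment, but a complete version of your argument should state that $\L$ is compact and use that when identifying $v^\ast$ with a loss profile attained by some $\theta$. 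The same identification step is also where the final claim $\lim_\lambda \htl = \htu$ (convergence of the parameters themselves, rather than of their loss profiles) is implicitly made in both your argument and the paper's.
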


The proof in Appendix~\ref{app:pfs} uses the notion of $\Gamma$-convergence for a sequence of functions -- which is stronger than uniform convergence. It can be leveraged to characterize the sequence of their minimizers \cite{Braides2006Handbook, Maso2012Introduction}.

Theorem~\ref{thm:argmin_convergence} is useful for several reasons. First, it further justifies the use of $L(\theta; \lambda)$ for capturing both utilitarian and Rawlsian optimal designs. In addition, it shows that \textit{some} minimax solution can be approximated by a sequence of minimizers to the relaxed problems. In the case where $\htr$ is not unique, then we conjecture that it is possible to characterize the limit of $\htl$ more precisely as follows.

For $u\in \R^n$ let $u^{(1)}$ denote its largest entry, and $u^{(-1)}$ be the vector of remaining entries. For $u,v \in \R^n$, we say that $u \preceq v$ if $u^{(1)} < v^{(1)}$ or both $u^{(1)} = v^{(1)}$ and $u^{(-1)} \preceq u^{(-1)}$. This is often known as the \textit{leximax} ordering. We expect that $\lim_{\lambda \to \infty} \htl = \hat\theta_{lex}$, where $\ell(\hat\theta_{lex}) \preceq \ell(\theta)$ for all $\theta$, but to the best of our knowledge this has not yet been rigorously proven.


\paragraph{Example: Linear Regression}
We now turn to a simple setting, with $\Theta = \R$, $f_\theta(x) = \theta x$, and $\ell(\hat y, y) = (\hat y - y) ^2$. For simplicity we also assume that $\E[X] = \E[Y] = 0$. Plugging these into problem \eqref{opt:continuum} yields the following convex and unconstrained optimization problem:
\begin{equation}
	\label{opt:lin_reg}
	\min_{\theta \in \R} \frac{1}{\lambda} \log \left(\frac{1}{n} \sum_i e^{\lambda (\theta x_i - y_i)^2}\right).
\end{equation}
The necessary (and sufficient) first-order condition can be computed as:
\begin{equation}
	0 = \sum_i \frac{e^{\lambda (\htl x_i - y_i)^2}}{\sum_j e^{\lambda (\htl x_j - y_j)^2}} (\htl x_i - y_i) x_i.
\end{equation}
Manipulating the above, we obtain:
\begin{equation}
	\htl = \frac{\widetilde{\mathrm{Cov}}(X,Y)}{\widetilde{\mathrm{Var}}(X)},
\end{equation}
which seems to be the usual estimator. However, now the covariance and variance are computed with respect to a twisted measure $\tilde \P_\lambda$, which satisfies $\frac{d\tilde \P_\lambda}{d\P}(x_i) = \frac{e^{\lambda (\htl x_i - y_i)^2} }{n^{-1} \sum_j e^{\lambda (\htl x_j - y_j)^2}}.$ Namely, this measure ascribes larger (resp. smaller) weights to observations whose exponentiated loss is greater (resp. less) than the average. However, it depends explicitly on $\htl$, and therefore the optimal solution cannot be computed in closed form.

Let us now informally consider what happens for large $\lambda$. The quantity $\sum_j e^{\lambda (\htl x_j - y_j)^2}$ is dominated by the observations with maximum loss, and equal measure is given to each of them. Therefore, if we let $\I = \argmax_i (\htl x_i - y_i)^2$, then $\widetilde{\mathrm{Cov}}(X,Y) \approx \sum_{i \in \I} y_{i} x_{i}$ and $\widetilde{\mathrm{Var}}(X) \approx \sum_{i \in \I}x_{i}^2$. Hence, for large $\lambda$, it follows that $\htl \approx \frac{\mathrm{Cov}(X_{\I}, Y_{\I})}{\mathrm{Var}(X_{\I})}$, which is exactly the usual least squares estimator -- only restricted to observations in the set $\I$.

We note that this setting has been more closely studied in another paper. In particular, under the assumption of generalized linear models, \citet{Li2021Tilted} derive several interesting properties of the optimal solution. Under reasonable conditions, they show that the average loss (resp. maximum loss) is increasing (resp. decreasing) in $\lambda$ at the optimal solution $\htl$. In addition, they prove that the empirical variance of the residuals $(\htl x_i - y_i)$ is non-increasing in $\lambda$, and verifies this to be the case in simulations. We might therefore expect that the finite-sample variance of $\htl$ is also non-decreasing in $\lambda$, although this has not been formally shown.


\subsection{Further Properties}
\label{sec:further_prop}

There are many other desirable properties of optimal solutions to learning problems, including (but not limited to) generalization performance, estimator properties, computational tractability, and optimality guarantees. In this section, we briefly touch on some of these topics and highlight connections to other areas of work -- such as fairness-penalized optimization and adversarial reweighting of observations. Strengthening these results remains an active and interesting directions for future research.


\paragraph{Identifiability}
From a statistical perspective, a natural question to ask about problem \eqref{opt:continuum} is whether or not the `true' parameter is identifiable. That said, if the data is generated according to some $\theta^* \in \Theta$, is it possible to find $\theta^*$? In the following result, we show that this requires a stronger condition than unbiased errors, which depends on the choice of loss function.

\begin{prop}
	\label{prop:identifible}
	Assume that $\exists \theta^* \in \Theta$ for which $Y_i | X_i \overset{i.i.d.}{\sim} f_{\theta^*}(X_i) + \epsilon_i$, where $\epsilon_1...\epsilon_n | X_i$ are i.i.d. according to density function $f_\epsilon$. Assume also that $\ell(\hat y, y) = g(y - \hat y)$ for some differentiable and positive-valued $g$, that is strictly increasing in $|y - \hat y|$. For any $r \in \mathrm{Range}(g)$, let $g^{-1}_{(-)}(r)$ and $g^{-1}_{(+)}(r)$ denote its negatively- and positively-valued inverse, respectively. 
	
	Then, if and only if $g'\left(g^{-1}_{(+)}(r)\right) f_\epsilon\left(g^{-1}_{(+)}(r)\right) = -g'\left(g^{-1}_{(-)}(r)\right) f_\epsilon\left(g^{-1}_{(-)}(r)\right)$ for all $r \in \mathrm{Range}(g)$, then over the randomness of the sample $X,Y$, we have: 
	\begin{equation}
		\E\left[\nabla_\theta L(\theta^*; \lambda, X, Y)\right] = 0, \, \forall \lambda > 0.
	\end{equation}
\end{prop}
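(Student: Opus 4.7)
The plan is to compute $\nabla_\theta L(\theta^*; \lambda)$ explicitly, use independence of $\epsilon$ from $X$ to factor the expectation, and then reduce the problem to a one-dimensional integral whose vanishing is exactly the condition in the statement.

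Applying the chain rule to (\ref{eq:continuum}) with $\epsilon_i := Y_i - f_{\theta^*}(X_i)$ gives
\begin{equation*}
  \nabla_\theta L(\theta^*; \lambda, X, Y) = -\sum_{i=1}^n \pi_i\, g'(\epsilon_i)\, \nabla_\theta f_{\theta^*}(X_i),
  \qquad \pi_i := \frac{e^{\lambda g(\epsilon_i)}}{\sum_j e^{\lambda g(\epsilon_j)}}.
\end{equation*}
Since $(\epsilon_1, \ldots, \epsilon_n)$ is independent of $X$ and i.i.d., and $\E[\pi_i g'(\epsilon_i)]$ is the same for all $i$ by symmetry, the claim reduces to showing $\E[\pi_1 g'(\epsilon_1)] = 0$ iff the pointwise identity holds. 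Conditioning on $\epsilon_{-1}$ and setting $S := \sum_{j\ge 2} e^{\lambda g(\epsilon_j)}$, this inner expectation is
\begin{equation*}
  \E\bigl[\pi_1 g'(\epsilon_1) \mid \epsilon_{-1}\bigr] = \int_{\R} \frac{e^{\lambda g(z)}\, g'(z)}{e^{\lambda g(z)} + S}\, f_\epsilon(z)\, dz.
\end{equation*}

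The key observation is that ``$\ell$ strictly increasing in $|y - \hat y|$'' forces $g$ to be even, so $e^{\lambda g(z)}$ is even while $g'$ is odd. Splitting the integral at $z = 0$ and substituting $z \mapsto -z$ on the negative half collapses it to
\begin{equation*}
  \int_0^\infty \frac{e^{\lambda g(z)}}{e^{\lambda g(z)} + S}\, \bigl[\, g'(z) f_\epsilon(z) + g'(-z) f_\epsilon(-z) \,\bigr]\, dz,
\end{equation*}
and, after the reparametrization $r = g(z)$ with $z = g^{-1}_{(+)}(r)$ and $-z = g^{-1}_{(-)}(r)$, the bracketed integrand is exactly the quantity whose pointwise vanishing is hypothesized in the proposition.

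The sufficient direction then follows immediately: under the condition, the bracket vanishes identically for a.e.\ $z > 0$, so the integral is zero for every $\lambda > 0$ and every $S$, and the tower rule delivers $\E[\pi_1 g'(\epsilon_1)] = 0$. The main obstacle is the converse: extracting a pointwise identity from the vanishing of the integral for all $\lambda > 0$. I would allow both $\lambda$ and $S$ to range freely (the latter by varying $n$ and by conditioning on different realizations of $\epsilon_{-1}$), so that the family of weight kernels $\tfrac{e^{\lambda g(z)}}{e^{\lambda g(z)} + S}$, after the change of variable $r = g(z)$, becomes a Laplace/resolvent-type family rich enough to separate signed measures on $\mathrm{Range}(g)$. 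Standard injectivity of this transform then forces $g'(z) f_\epsilon(z) + g'(-z) f_\epsilon(-z) = 0$ almost everywhere, completing the ``only if'' direction.
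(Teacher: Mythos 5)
Your gradient computation and the reduction to $\E[\pi_1 g'(\epsilon_1)]=0$ (via independence of $\epsilon$ from $X$ and exchangeability of the $\epsilon_i$) match the paper's starting point, and your sufficiency argument is sound in the case it covers. But there are two genuine gaps. First, ``strictly increasing in $|y-\hat y|$'' does not force $g$ to be even --- it only pins down the ordering of the values $g(z)$ by $|z|$, leaving $g(z)\neq g(-z)$ entirely possible; the paper deliberately keeps this generality (hence the two distinct inverses $g^{-1}_{(+)}$, $g^{-1}_{(-)}$, and the remark after the proposition that a symmetric loss is only a \emph{special case}). Once $g$ is not even, your reflection $z\mapsto -z$ no longer factors the weight out of the bracket, because $e^{\lambda g(z)}\neq e^{\lambda g(-z)}$. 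The repair is to change variables to $r=g(z)$ separately on each half-line, so that the weight depends only on $r$ and is common to both branches --- which is what the paper accomplishes more cleanly by conditioning on $g(\epsilon_1),\dots,g(\epsilon_n)$ and $X$: the weights $\pi_i$ are measurable with respect to that conditioning, pull out of the inner expectation, and the whole problem collapses to the single $\lambda$-free quantity $\E\left[g'(\epsilon_i)\mid g(\epsilon_i)=r\right]$, whose vanishing for all $r$ is the stated condition.

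Second, your ``only if'' direction does not go through as described. You propose to let $S=\sum_{j\ge2}e^{\lambda g(\epsilon_j)}$ range freely, but $S$ is a random variable that gets integrated out: the hypothesis is only that the \emph{unconditional} expectation vanishes, so you are not entitled to the vanishing of $\E\left[\pi_1 g'(\epsilon_1)\mid \epsilon_{-1}\right]$ at individual realizations of $S$. Varying $n$ is likewise unavailable, since the sample size is fixed in the statement. What remains is a one-parameter family (in $\lambda$) of kernels already averaged over $S$, and the claimed ``standard injectivity'' of that family is itself a nontrivial completeness assertion that would need proof. The paper's conditioning sidesteps all of this: because the inner conditional expectation $h(r)=\E\left[g'(\epsilon)\mid g(\epsilon)=r\right]$ depends on neither $\lambda$ nor the other observations, the equivalence is established at the level of $h$ rather than by inverting an integral transform over $\lambda$. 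I would restructure your proof around that conditioning step.
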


The proof is found in Appendix~\ref{app:pfs}. A special case of Proposition~\ref{prop:identifible} occurs when both the distribution of errors $f_\epsilon$ and the primitive loss function $\ell$ are symmetric. In particular, given a symmetric loss function $\ell$, if the distribution of errors is not symmetric, then there is no hope of obtaining a consistent estimator -- the true parameter $\theta^*$ is not identifiable through the first-order conditions. However, in order to definitely prove consistency, it may be necessary to show that $L(\theta; \lambda)$ satisfies a uniform law of large numbers, which often requires compactness of $\Theta$ and that $L(\theta; \lambda)$ be bounded by a integrable function.


\paragraph{Regularization}
We now show that problem \eqref{opt:continuum} can be used to bound an optimization problem that penalizes the objective based on its worst-case individual loss. Since for any $\lambda$ and  $\theta$, $L(\theta; \lambda)$ is upper bounded (resp. lower bounded) by the maximum (resp. average) loss, there must exist some $\gamma \in (0,1)$ for which:
\begin{equation}
	\label{eq:l_g_equality}
	L(\theta; \lambda) = \gamma \left(\frac{1}{n} \sum_i \ell(f_\theta(x_i), y_i)\right) + (1-\gamma) \max_i \ell(f_\theta(x_i), y_i).
\end{equation}
Fix some $\lambda$ and let $\htl$ be the associated optimal solution to problem \eqref{opt:continuum}. We can compute its corresponding value of $\hat\gamma$, and majorize the following penalized optimization problem:
\begin{equation}
	\label{opt:regularized}
	\min_{\theta\in\Theta} \, \frac{1}{n} \sum_i \ell(f_\theta(x_i), y_i) + \frac{1-\hat\gamma}{\hat\gamma} \max_i \ell(f_\theta(x_i), y_i) \le \frac{1}{\hat\gamma} L(\htl; \lambda).
\end{equation}
A similar bound can be computed in the opposite order: fix $\gamma$, minimize the $\gamma$-regularized objective (that appears in the left-hand side of \eqref{opt:regularized}) for $\hat\theta_\gamma$, compute the value of $\hat\lambda$ that satisfies \eqref{eq:l_g_equality}, and observe that:
$$\frac{1}{\gamma} \min_{\theta \in \Theta} L(\theta; \hat\lambda) \le \frac{1}{n} \sum_i \ell\left(f_{\hat\theta_\gamma}(x_i), y_i\right) + \frac{1-\gamma}{\gamma} \max_i \ell\left(f_{\hat\theta_\gamma}(x_i), y_i\right).$$
We note that equality in the above need not hold -- optimal solutions to problem \eqref{opt:continuum} need not be minima of \eqref{opt:regularized}. In particular, $L(\theta; \lambda)$ depends on the full distribution of $\ell(\theta)$, whereas problem \eqref{opt:regularized} is only concerned with its mean and lowest percentile. Hence, it is not always the case that problem \eqref{opt:continuum} yields the value of $\theta$ that minimizes worst-case loss for some fixed average loss.\footnote{Consider a simple example where there are two possible loss profiles (i.e. two possible values for $\ell_1(\theta),...,\ell_n(\theta)$) given by $[0.5,2.75,2.75]$ and $[1,2,3]$. The former has smaller (resp. larger) objective value for small (resp. large) $\lambda$. However, both have the same average loss. In particular, it is possible for $[0.5,2.75,2.75]$ to not be the minimizer of $L(\cdot; \lambda)$.} Nonetheless, a comparative benefit of problem \eqref{opt:continuum} is that the objective function is smooth, and therefore can be solved numerically by many common algorithms.


\paragraph{Algorithmic Considerations}
From a technical perspective, $L(\theta; \lambda)$ may be preferable to the Rawlsian minimax objective because it is both differentiable and convex, which is shown in the following.

\begin{prop}
	\label{prop:diff_convex}
	If $\ell(f_\theta(x), y)$ is differentiable and convex in $\theta$ for all $x,y$, then $L(\theta; \lambda)$ is convex.
\end{prop}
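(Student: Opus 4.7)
The plan is to recognize $L(\theta;\lambda)$ as a composition of a scaled log-sum-exp function with the family of loss functions, and then invoke the standard composition rule for convexity. Specifically, define $h:\R^n \to \R$ by $h(z) = \tfrac{1}{\lambda}\log\bigl(\tfrac{1}{n}\sum_i e^{\lambda z_i}\bigr)$, and write $L(\theta;\lambda) = h(\ell_1(\theta),\dots,\ell_n(\theta))$. The proof then reduces to two ingredients: (i) $h$ is convex and coordinate-wise non-decreasing on $\R^n$, and (ii) each $\ell_i(\theta) = \ell(f_\theta(x_i),y_i)$ is convex in $\theta$ by hypothesis.

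For ingredient (i), convexity of $h$ follows from the classical convexity of the log-sum-exp function: the map $z \mapsto \log\sum_i e^{z_i}$ is convex, and $h$ differs from it only by a linear rescaling $z \mapsto \lambda z$ of the domain, division by $\lambda > 0$, and subtraction of the constant $\lambda^{-1}\log n$, none of which affects convexity. Monotonicity is immediate from the partial derivative $\partial h/\partial z_j = e^{\lambda z_j}/\sum_i e^{\lambda z_i} \ge 0$. I would cite (or briefly re-derive) convexity of log-sum-exp either by computing the Hessian and showing positive semi-definiteness via Cauchy--Schwarz, or by invoking Hölder's inequality directly. This step, while standard, is the only place where any real computation is required, and so is the natural ``main obstacle,'' though it is quite routine.

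For ingredient (ii) combined with (i), I apply the vector composition rule: if $h$ is convex and non-decreasing in each argument, and each component $g_i$ is convex, then $\theta \mapsto h(g_1(\theta),\dots,g_n(\theta))$ is convex. Explicitly, for $\theta_1,\theta_2 \in \Theta$ and $t \in [0,1]$, convexity of each $\ell_i$ gives $\ell_i(t\theta_1+(1-t)\theta_2) \le t\,\ell_i(\theta_1) + (1-t)\,\ell_i(\theta_2)$; coordinate-wise monotonicity of $h$ then yields
\begin{equation*}
L(t\theta_1+(1-t)\theta_2;\lambda) \le h\bigl(t\,\ell(\theta_1) + (1-t)\,\ell(\theta_2)\bigr),
\end{equation*}
and convexity of $h$ bounds the right-hand side by $t\,L(\theta_1;\lambda) + (1-t)\,L(\theta_2;\lambda)$. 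Chaining these inequalities gives the desired convexity. The differentiability hypothesis on $\ell$ is not needed for the convexity conclusion itself, but is consistent with how $L(\theta;\lambda)$ is used elsewhere in the paper and with inheriting differentiability of $L$ from that of $\ell$ via the chain rule.
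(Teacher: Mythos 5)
Your proof is correct and follows essentially the same route as the paper's: recognizing $L(\theta;\lambda)$ as the composition of the (convex) log-sum-exp function with the convex losses $\ell_i(\theta)$ and applying the vector composition rule. Your version is in fact more careful — it correctly identifies coordinate-wise monotonicity of the outer function as the property needed for the composition rule (the paper's one-line proof says ``non-negative'' where it means non-decreasing) and rightly observes that differentiability of $\ell$ is not needed for the convexity conclusion.
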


\begin{proof}
	Since $z \to \log ( \sum_i e^{\lambda z_i} )$ is convex (in $z\in \R_{\ge 0}^n$) and non-negative, then its composition with $\ell(f_\theta(x_i), y_i)$ (which needs only be differentiable and convex) is also convex.
\end{proof}

As a result, we can use first-order optimization methods, which often have guaranteed convergence to a local minimum. Computing the gradient of $L(\theta; \lambda)$ gives:
\begin{equation}
	\label{eq:grad_relaxed}
	\nabla_\theta L(\theta; \lambda) = \sum_i  \frac{e^{\lambda \ell(f_\theta(x_i), y_i)} }{ \sum_k e^{\lambda \ell(f_\theta(x_k), y_k)} } \nabla_{\theta}\ell(f_\theta(x_i), y_i),
\end{equation}
where $\nabla_\theta \ell(f_\theta(x_i), y_i)$ denotes the full gradient of $\ell(f_\theta(x_i), y_i)$ with respect to $\theta$.\footnote{For brevity, we omit the gradient of $f_\theta$ that would appear from the chain rule.} Observe that this is simply a weighted average of the gradient at each observation $i$, where the weights are positively correlated with the losses. There is a relationship to adversarially re-weighted learning, for example, \citet{Lahoti2020Fairness} allows an adversarial agent to re-weight observations in order to increase a learner's weighted loss. Here, the weights are similarly related to loss, only not adversarial but pre-determined.

In effect, steps along the gradient in \eqref{eq:grad_relaxed} reflect a relaxed version of Rawls's difference principle. Originally, the principle permits inequalities only when they are to the benefit of the least advantaged. Therefore, to `improve' over the status quo, one should aim to assist the worst-off. In \eqref{eq:grad_relaxed}, this is not necessarily the case -- any harm done to the worst-off can be justified if there is \textit{sufficient} benefit provided to others. The ability for such a setting to arise reflects a fundamental utilitarian influence. However, as $\lambda$ grows, it becomes increasingly (and impossibly) difficult to justify any harm done to the worst-off.

Practically, there can be a significant computational cost associated with gradient descent. The following Proposition is from Theorem 13 in \cite{Li2021Tilted}, and slightly re-formulated here.

\begin{prop}
	\label{prop:GD_convergence}
	Let $\Theta \subset \R^d$ for some $d$. Assume further that for all $x, y, \theta \in \X, \Y, \Theta$, the loss function $\ell(f_\theta(x), y)$ satisfies both $||\nabla_\theta \ell(f_\theta(x), y)||_2^2 \le C$ and
	$$ C_{min}I \preccurlyeq \nabla_\theta^2 \ell(f_\theta(x), y) \preccurlyeq C_{max} I,$$
	where $I$ denotes the $d$-dimensional identity matrix. 
	
	Then, by running gradient descent with step size $\alpha = \frac{1}{C_{max} + 2C \lambda}$, the $k$-th iteration $\theta^{(k)}$ satisfies:
	\begin{equation}
		L(\theta^{(k)}; \lambda) - L(\htl; \lambda) \le \left(1 - \frac{C_{min}}{C_{max} + 2 C \lambda}\right)^k \left(L(\theta^{(0)};\lambda) - L(\htl; \lambda)\right).
	\end{equation}
\end{prop}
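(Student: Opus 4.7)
The plan is to establish that $L(\theta; \lambda)$ is $C_{min}$-strongly convex and $(C_{max} + 2C\lambda)$-smooth, and then invoke the classical linear-convergence guarantee for gradient descent on a strongly convex, gradient-Lipschitz objective run at step size equal to the reciprocal of the smoothness parameter.

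First I would differentiate the expression in \eqref{eq:grad_relaxed} once more to obtain the Hessian. Writing $w_i(\theta) = e^{\lambda \ell_i(\theta)} / \sum_k e^{\lambda \ell_k(\theta)}$ for the softmax weights (which sum to $1$), the product rule yields a decomposition of the form
\begin{equation*}
\nabla_\theta^2 L(\theta; \lambda) = \sum_i w_i(\theta)\, \nabla_\theta^2 \ell_i(\theta) \; + \; \lambda \left( \sum_i w_i(\theta)\, \nabla_\theta \ell_i(\theta)\nabla_\theta \ell_i(\theta)^\top - \bar g(\theta)\bar g(\theta)^\top \right),
\end{equation*}
where $\bar g(\theta) = \sum_i w_i(\theta) \nabla_\theta \ell_i(\theta)$ is the $w$-weighted mean gradient. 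The bracketed term is precisely the $w$-weighted covariance matrix of the per-sample gradients, and is therefore positive semidefinite.

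Second, I would extract the strong convexity and smoothness constants. For strong convexity, since each $\nabla_\theta^2 \ell_i \succcurlyeq C_{min} I$ and the weights are nonnegative with $\sum_i w_i = 1$, the first summand satisfies $\sum_i w_i \nabla_\theta^2 \ell_i \succcurlyeq C_{min} I$; the covariance term is PSD, so $\nabla_\theta^2 L \succcurlyeq C_{min} I$. For smoothness, $\sum_i w_i \nabla_\theta^2 \ell_i \preccurlyeq C_{max} I$, and the gradient-norm bound $\|\nabla_\theta \ell_i\|_2^2 \le C$ forces $\sum_i w_i \nabla_\theta \ell_i \nabla_\theta \ell_i^\top \preccurlyeq C I$ and $\|\bar g\|^2 \le C$, so the covariance term is bounded above in operator norm by $2C$. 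Multiplying by $\lambda$ gives $\nabla_\theta^2 L \preccurlyeq (C_{max} + 2C\lambda)\, I$.

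Third, I would apply the textbook one-step descent lemma for an $m$-strongly convex and $M$-smooth function $f$: gradient descent with step size $\alpha = 1/M$ contracts suboptimality geometrically,
\begin{equation*}
f(\theta^{(k+1)}) - f(\theta^\star) \le \left(1 - \tfrac{m}{M}\right)\bigl( f(\theta^{(k)}) - f(\theta^\star) \bigr).
\end{equation*}
Substituting $m = C_{min}$, $M = C_{max} + 2C\lambda$, $f = L(\cdot; \lambda)$, and $\theta^\star = \htl$, then iterating from $k=0$, yields the stated bound.

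The only nontrivial step is the covariance upper bound; the rest is either an immediate consequence of the hypotheses on $\ell$ or a direct citation of the gradient-descent contraction inequality. I do not expect further obstacles, since convexity of $L(\theta; \lambda)$ has already been established in Proposition~\ref{prop:diff_convex} and the Hessian is uniformly bounded under the hypotheses, which justifies using a constant step size across all of $\Theta$.
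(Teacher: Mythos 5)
Your proposal is correct and takes essentially the same route as the paper: both compute the Hessian of $L(\theta;\lambda)$ (yours in the uncentered second-moment-minus-mean form, the paper's as a weighted covariance of centered per-sample gradients quoted from Li et al.), sandwich it between $C_{min} I$ and $(C_{max} + 2C\lambda) I$, and then invoke the standard linear-convergence guarantee for gradient descent on a smooth, strongly convex objective. The only cosmetic differences are that you derive the Hessian and cite the textbook contraction lemma directly rather than referencing Li et al.'s Lemma 3 and Theorem 13.
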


As a direct implication, the convergence rate suffers with increases to $\lambda$. However, we might expect that problem \eqref{opt:continuum} remains tractable for up to moderate values of $\lambda$. In addition, note that the required step size to achieve linear convergence is also decreasing in $\lambda$, which may yield further challenges. The design of efficient algorithms to solve problem \eqref{opt:continuum} remains an open area. In our simulations, we observed that computation time was significantly reduced by using $\htl$ as the starting point for finding a new optimum $\hat \theta_{\lambda + \delta}$, for some small step $\delta > 0$.

\section{Experiments}
\label{sec:experiments}
We now demonstrate this paper's methodology by solving problem \eqref{opt:continuum} over a range of $\lambda$ for several common datasets. The following can all be obtained from the UCI Machine Learning Repository \cite{Dua2019UCI}.
\begin{itemize}
	\item \textbf{COMPAS}: Arrest records from 2013 and 2014 in Broward County, Florida by \cite{ProPublicaCOMPAS}, used in \citet{Angwin2016Machine}. 
	\item \textbf{Bank Marketing}: Part of a marketing campaign by a Portugese bank between 2008 and 2013 \cite{Moro2014Data}. 
	\item \textbf{Adult Income}: Collected from the 1994 US Census, including demographic features and income. 
	\item \textbf{Credit Card Default}: Credit card holders of a large Taiwanese bank, collected by \citet{Yeh2009Comparisons}. 
	\item \textbf{Communities \& Crime}: A combination of many different features of counties within the United States, collected by \citet{Redmond2002Data}. Includes sociodemographic data from the Census, survey data from law enforcement, and crime statistics collected by the FBI. 
\end{itemize}
Although the objective function of \eqref{eq:continuum} focuses on maximum individual loss, we also study average losses within groups. Table~\ref{tab:datasets} shows the target variables and the group labels used in each dataset. For conciseness, all other details of our training methodology are omitted, but publicly-available \href{https://github.com/drigobon/rawlsian-algorithmic-fairness}{here}.

\begin{table}
	\caption{Prediction targets and group-defining features.}
	\label{tab:datasets}	
	\centering
	\begin{tabular}{r| l l}
		\toprule
		Dataset & Target & Groups \\
		\midrule
		COMPAS & 2-Year Recidivism & Race \\
		Bank Marketing & Subscription Decision & Marriage Status \\
		Adult Income & Income $>$ \$50,000 & Race \\
		Credit Card Default & Payment Default & Marriage Status \\
		Communities \& Crime & Violent Crime Level & Poverty Percentage (Quartile) \\
		\bottomrule
	\end{tabular}
\end{table}

\begin{figure}
	\hspace*{0.05\linewidth}
	\begin{subfigure}{0.89\linewidth}
		\centering
		\includegraphics[width = \linewidth]{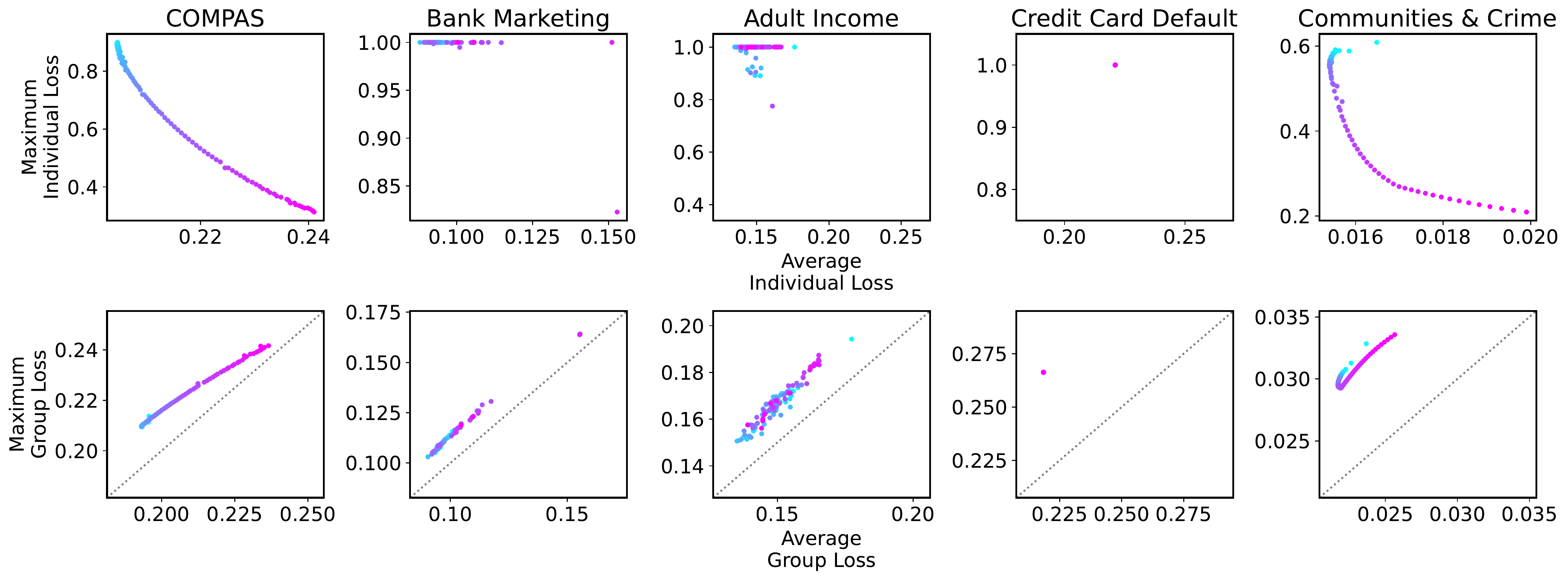}
	\end{subfigure}
	\begin{subfigure}{0.05\linewidth}
		\centering
		\includegraphics[width = 0.75\linewidth]{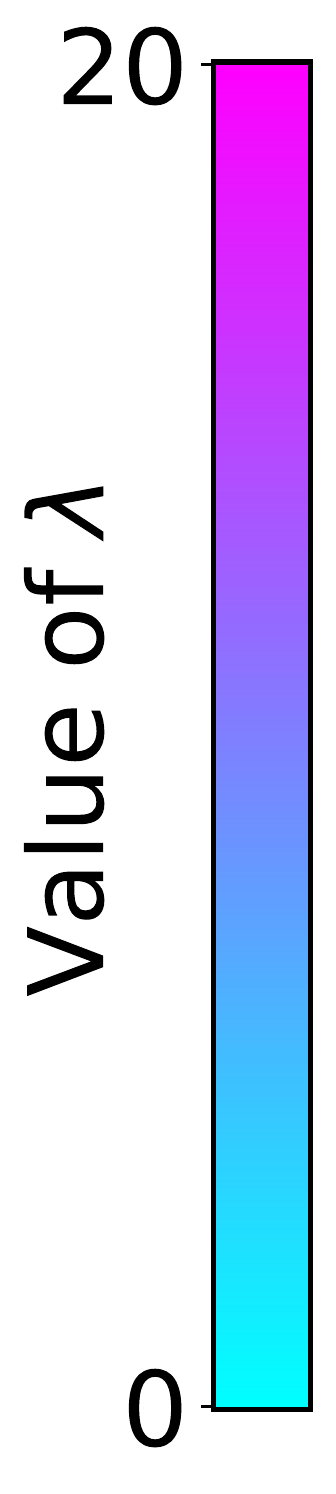}
		\vspace{4.5em}
	\end{subfigure}
	\captionsetup{font=small, width=0.89\linewidth, justification=justified}
	\caption{The tradeoff between average and maximum loss at various $\lambda$ for logistic regression models. Points are colored according to $\lambda$. The top panel illustrates the tradeoff for individual losses. The bottom panel shows the analogue for within-group average loss, with the identity line in gray.}
	\label{fig:logistic}
\end{figure}

\paragraph{Average and Worst-Case Losses}
The tradeoff between average and worst-case loss reflects exactly the tradeoff between utilitarian and Rawlsian measures of the good. Given the sequence of optimal solutions $\{\htl\}_{\lambda > 0}$, we compute their average and worst-case individual loss \textit{within the training sample}. For the simple setting of logistic regression, these are shown in the top panel of Figure~\ref{fig:logistic}. The tradeoff is most visible for the COMPAS and Communities \& Crime datasets, where as $\lambda$ increases we see maximum loss reduced at the expense of average loss. However, for the other datasets worst-case performance is not significantly improved by varying $\lambda$. In fact, it appears that optimal models for the Credit Card Default dataset are indifferent to the value of $\lambda$.

We also compute average- and worst-case group loss for these datasets, where a group's loss is defined as its average -- see \eqref{opt:grp_continuum}. The bottom panel of Figure~\ref{fig:logistic} plots an analogous tradeoff between average and maximum group loss for several values of $\lambda$. In the COMPAS dataset, we see that increasing $\lambda$ yields a gradually more egalitarian outcome -- wherein the average and maximum group losses are increased together. This suggests that equality may come at the expense of all groups.

\paragraph{Increasing Model Complexity}
We are particularly interested in studying how the curves in Figure~\ref{fig:logistic} change as model complexity is increased. Intuitively, this corresponds to enlarging $\Theta$ -- the set of feasible predictive models. Practically, this is associated with a greater degree of model expressibility (e.g. adding additional covariates, or training a model with greater depth). Here, we study neural networks of gradually increasing depth, and compare them to the baseline of a simple logistic regression. The main text only includes results for the COMPAS dataset, with remaining figures found in Appendix~\ref{app:figures}.

The top panel of Figure~\ref{fig:compas} shows that average individual loss is not significantly affected by increasing the number of layers. However, for the same value of average (individual) loss, maximum individual loss can be significantly reduced -- see, for instance, the point with least maximum loss for average loss equal to 0.2. This observation suggests that when increasing model complexity, Rawlsian good may exhibit larger returns than utilitarian good. In the bottom panel of Figure~\ref{fig:compas}, both average and maximum group losses greatly vary. Within this space of group losses, we often see a difference between the egalitarian (i.e. closest to the diagonal) and Rawlsian optimum. This observation suggests that equality remains at odds with both utilitarian and Rawlsian good, and in particular, that a variation of the group-skew condition from \citet{Liang2022Algorithmic} may hold.

\begin{figure}[!ht]
	\begin{subfigure}{\linewidth}
		\centering
		\includegraphics[width = 0.95\linewidth]{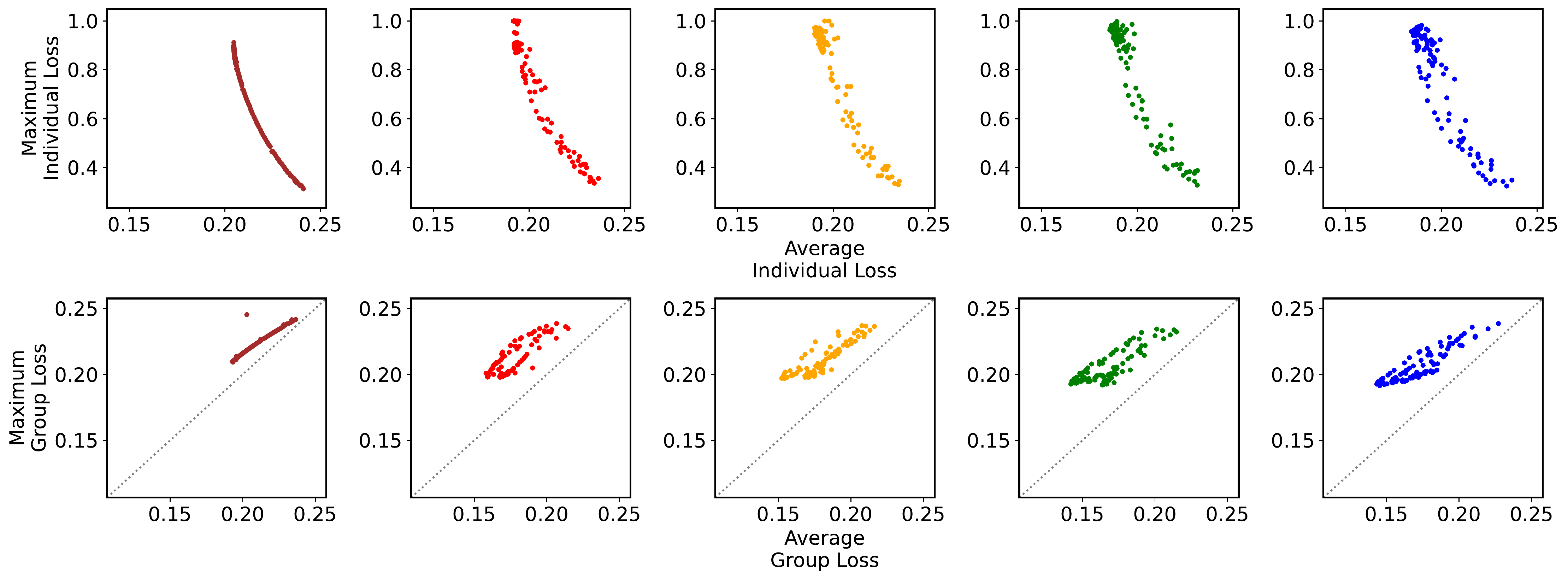}
	\end{subfigure}
	\captionsetup{font=small, width=0.95\linewidth, justification=justified}
	\caption{Result of increasing model complexity for the COMPAS dataset. From left to right: Logistic Regression, 1-Layer Neural Network,..., 4-Layer Neural Network. The top panel plots average vs maximum individual loss. The bottom panel plots average vs maximum group loss, and includes the identity line in gray for reference. In this dataset, groups were defined based on race.}
	\label{fig:compas}
\end{figure}

\section{Discussion and Conclusion}
\label{sec:discussion}

In this paper, we have presented a class of objective functions for supervised learning problems that mixes aspects of both utilitarian and Rawlsian ethical frameworks. Our theoretical results are complimented by experiments on commonly-studied datasets.

Empirically, we often see a tradeoff between utilitarian and Rawlsian measures of good. From an economic perspective, this tradeoff can be interpreted as a `production frontier' between the two goods. In this context, increasing model complexity amounts to greater production capabilities. Therefore, to determine which model along this frontier is best, it is necessary to consider a designer's preferences over fictitious bundles of `utilitarian good' and `Rawlsian good'. Namely, designers must determine how much utilitarian good they are willing to sacrifice for some increase in Rawlsian good. To view minimax fairness (e.g. maximum loss) as a constraint significantly reduces the richness of this question -- effectively assuming that the designer's marginal rate of substitution between these two goods is infinite. Instead, we advocate for a fair-by-design perspective that incorporates broader consideration of a designer's preferences.

The objective functions in this paper correspond to a relaxation of the Rawlsian original position. We have shown that this relaxation is closely tied to expected utility of a risk-averse individual facing random assignment within the population -- a different veil of ignorance. In principle, it is therefore possible to choose an ideal model based on the risk appetite of model-impacted individuals. Hence, we might expect that for low-consequence decisions, risk aversion is low and less emphasis is given to the Rawlsian good. Conversely, high-consequence decision-making such as credit and criminal justice would be strongly influenced by Rawlsian principles. Then clearly, there is unlikely to be a universally agreed upon level of fairness, which must be instead closely tied to a model's use cases.

There are several interesting and valuable directions for future work. First and foremost, we only study two particular approaches to distributive justice and their application to the in-processing stage of model building. It is also common to consider fairness during pre-processing and post-processing stages. Moreover, there are other ethical theories that can inform the development of fair models. For instance, the capability approach in \cite{Sen1999Commodities} was developed as an alternative to utility or resource-based theories of fairness. In addition, we empirically observed a conflict between egalitarian and Rawlsian optima, which has been characterized by \citet{Liang2022Algorithmic} for two groups in classification settings. The further study of these theories and their potential tradeoffs remains an open area of work.

In addition, there are opportunities to further develop theory behind this paper's utilitarian-Rawlsian continuum. For example, it may be possible to develop more efficient algorithms for solving the relaxed optimization problem at large values of $\lambda$, or even for computing the optimal solutions over a wide range of $\lambda$. The statistical properties of these estimators are also of interest, as we believe that large values of $\lambda$ would cause the optimal solution to have large variance (over the randomness of a set of observations). Hence it may be the case that Rawlsian good is at odds with estimation quality. Finally, characterizing the effects of increased model complexity is an extremely interesting open problem. For example, a Rawlsian designer would have no objection to including protected attributes in the training data, as they would be used only to benefit the least advantaged. It is valuable to analyze how much benefit can be gained from doing so.

Human decision-makers are uniquely endowed with the ability to entertain -- but not fully accept -- conflicting ethical perspectives and ideals. We hope that our work is a step towards building models that more closely reflect this ability.

\newpage
\bibliographystyle{plainnat}
\bibliography{library}

\begin{thebibliography}{59}
\providecommand{\natexlab}[1]{#1}
\providecommand{\url}[1]{\texttt{#1}}
\expandafter\ifx\csname urlstyle\endcsname\relax
  \providecommand{\doi}[1]{doi: #1}\else
  \providecommand{\doi}{doi: \begingroup \urlstyle{rm}\Url}\fi

\bibitem[Angwin et~al.(2016)Angwin, Larson, Mattu, and
  Kirchner]{Angwin2016Machine}
Julia Angwin, Jeff Larson, Surya Mattu, and Lauren Kirchner.
\newblock Machine {{Bias}}.
\newblock In \emph{Ethics of Data and Analytics}, pages 254--264. {Auerbach
  Publications}, 2016.

\bibitem[Barocas et~al.(2021)Barocas, Hardt, and
  Narayanan]{Barocas2021Fairness}
Solon Barocas, Moritz Hardt, and Arvind Narayanan.
\newblock \emph{Fairness and Machine Learning}.
\newblock 2021.

\bibitem[Bentham(1996)]{Bentham1996Introduction}
Jeremy Bentham.
\newblock An {{Introduction}} to the {{Principles}} of {{Morals}} and
  {{Legislation}}.
\newblock In \emph{The {{Collected Works}} of {{Jeremy Bentham}}}. {Clarendon
  Press}, January 1996.
\newblock ISBN 978-0-19-158975-1.

\bibitem[Berk et~al.(2017)Berk, Heidari, Jabbari, Joseph, Kearns, Morgenstern,
  Neel, and Roth]{Berk2017Convex}
Richard Berk, Hoda Heidari, Shahin Jabbari, Matthew Joseph, Michael Kearns,
  Jamie Morgenstern, Seth Neel, and Aaron Roth.
\newblock A {{Convex Framework}} for {{Fair Regression}}, June 2017.

\bibitem[Braides(2006)]{Braides2006Handbook}
Andrea Braides.
\newblock A handbook of {{$\Gamma$-convergence}}.
\newblock In M.~Chipot and P.~Quittner, editors, \emph{Handbook of
  {{Differential Equations}}: {{Stationary Partial Differential Equations}}},
  volume~3, pages 101--213. {North-Holland}, January 2006.
\newblock \doi{10.1016/S1874-5733(06)80006-9}.

\bibitem[Chalfin et~al.(2016)Chalfin, Danieli, Hillis, Jelveh, Luca, Ludwig,
  and Mullainathan]{Chalfin2016Productivity}
Aaron Chalfin, Oren Danieli, Andrew Hillis, Zubin Jelveh, Michael Luca, Jens
  Ludwig, and Sendhil Mullainathan.
\newblock Productivity and {{Selection}} of {{Human Capital}} with {{Machine
  Learning}}.
\newblock \emph{American Economic Review}, 106\penalty0 (5):\penalty0 124--127,
  May 2016.
\newblock ISSN 0002-8282.
\newblock \doi{10.1257/aer.p20161029}.

\bibitem[Cooper and Abrams(2021)]{Cooper2021Emergent}
A.~Feder Cooper and Ellen Abrams.
\newblock Emergent unfairness in algorithmic fairness-accuracy trade-off
  research.
\newblock In \emph{Proceedings of the 2021 {{AAAI}}/{{ACM Conference}} on
  {{AI}}, {{Ethics}}, and {{Society}}}, pages 46--54, July 2021.
\newblock \doi{10.1145/3461702.3462519}.

\bibitem[{Corbett-Davies} and Goel(2018)]{Corbett-Davies2018Measure}
Sam {Corbett-Davies} and Sharad Goel.
\newblock The measure and mismeasure of fairness: A critical review of fair
  machine learning, August 2018.

\bibitem[{Corbett-Davies} et~al.(2017){Corbett-Davies}, Pierson, Feller, Goel,
  and Huq]{Corbett-Davies2017Algorithmic}
Sam {Corbett-Davies}, Emma Pierson, Avi Feller, Sharad Goel, and Aziz Huq.
\newblock Algorithmic {{Decision Making}} and the {{Cost}} of {{Fairness}}.
\newblock In \emph{Proceedings of the 23rd {{ACM SIGKDD International
  Conference}} on {{Knowledge Discovery}} and {{Data Mining}}}, {{KDD}} '17,
  pages 797--806, {New York, NY, USA}, August 2017. {Association for Computing
  Machinery}.
\newblock ISBN 978-1-4503-4887-4.
\newblock \doi{10.1145/3097983.3098095}.

\bibitem[D'Aspremont and Gevers(1977)]{DAspremont1977Equity}
Claude D'Aspremont and Louis Gevers.
\newblock Equity and the informational basis of collective choice.
\newblock \emph{The Review of Economic Studies}, 44\penalty0 (2):\penalty0 199,
  June 1977.
\newblock ISSN 00346527.
\newblock \doi{10.2307/2297061}.

\bibitem[Diana et~al.(2021)Diana, Gill, Kearns, Kenthapadi, and
  Roth]{Diana2021Minimax}
Emily Diana, Wesley Gill, Michael Kearns, Krishnaram Kenthapadi, and Aaron
  Roth.
\newblock Minimax group fairness: Algorithms and experiments.
\newblock In \emph{Proceedings of the 2021 {{AAAI}}/{{ACM Conference}} on
  {{AI}}, {{Ethics}}, and {{Society}}}, March 2021.

\bibitem[Dieterich et~al.(2016)Dieterich, Mendoza, and
  Brennan]{Dieterich2016COMPAS}
William Dieterich, Christina Mendoza, and Tim Brennan.
\newblock {{COMPAS Risk Scales}}: {{Demonstrating Accuracy Equity}} and
  {{Predictive Parity}}.
\newblock \emph{Northpointe Inc.}, July 2016.

\bibitem[Dua and Graff(2017)]{Dua2019UCI}
Dheeru Dua and Casey Graff.
\newblock {UCI} machine learning repository, 2017.
\newblock URL \url{http://archive.ics.uci.edu/ml}.

\bibitem[Dwork et~al.(2012)Dwork, Hardt, Pitassi, Reingold, and
  Zemel]{Dwork2012Fairness}
Cynthia Dwork, Moritz Hardt, Toniann Pitassi, Omer Reingold, and Richard Zemel.
\newblock Fairness through awareness.
\newblock In \emph{Proceedings of the 3rd {{Innovations}} in {{Theoretical
  Computer Science Conference}}}, {{ITCS}} '12, pages 214--226, {New York, NY,
  USA}, January 2012. {Association for Computing Machinery}.
\newblock ISBN 978-1-4503-1115-1.
\newblock \doi{10.1145/2090236.2090255}.

\bibitem[Eubanks(2018)]{Eubanks2018Automating}
Virginia Eubanks.
\newblock \emph{Automating {{Inequality}}: {{How High-Tech Tools Profile}},
  {{Police}}, and {{Punish}} the {{Poor}}}.
\newblock {St. Martin's Publishing Group}, January 2018.
\newblock ISBN 978-1-4668-8596-7.

\bibitem[Gilboa and Schmeidler(1989)]{Gilboa1989Maxmin}
Itzhak Gilboa and David Schmeidler.
\newblock Maxmin expected utility with a non-unique prior.
\newblock \emph{Journal of Mathematical Economics}, 18\penalty0 (2), 1989.

\bibitem[Hammond(1976)]{Hammond1976Equity}
Peter~J. Hammond.
\newblock Equity, arrow's conditions, and rawls' difference principle.
\newblock \emph{Econometrica}, 44\penalty0 (4):\penalty0 793, July 1976.
\newblock ISSN 00129682.
\newblock \doi{10.2307/1913445}.

\bibitem[Hardt et~al.(2016)Hardt, Price, Price, and Srebro]{Hardt2016Equality}
Moritz Hardt, Eric Price, Eric Price, and Nati Srebro.
\newblock Equality of opportunity in supervised learning.
\newblock In \emph{Advances in {{Neural Information Processing Systems}}},
  volume~29. {Curran Associates, Inc.}, 2016.

\bibitem[Harsanyi(1955)]{Harsanyi1955Cardinal}
John~C. Harsanyi.
\newblock Cardinal welfare, individualistic ethics, and interpersonal
  comparisons of utility.
\newblock \emph{Journal of Political Economy}, 63\penalty0 (4):\penalty0
  309--321, 1955.

\bibitem[Harsanyi(1975)]{Harsanyi1975Nonlinear}
John~C. Harsanyi.
\newblock Nonlinear social welfare functions.
\newblock \emph{Theory and Decision}, 6\penalty0 (3):\penalty0 311--332, August
  1975.
\newblock ISSN 0040-5833, 1573-7187.
\newblock \doi{10.1007/BF00136200}.

\bibitem[Hashimoto et~al.(2018)Hashimoto, Srivastava, Namkoong, and
  Liang]{Hashimoto2018Fairness}
Tatsunori~B Hashimoto, Megha Srivastava, Hongseok Namkoong, and Percy Liang.
\newblock Fairness without demographics in repeated loss minimization.
\newblock In \emph{Proceedings of the 2018 {{International Conference}} on
  {{Machine Learning}}}, 2018.

\bibitem[Heidari et~al.(2018)Heidari, Ferrari, Gummadi, and
  Krause]{Heidari2018Fairness}
Hoda Heidari, Claudio Ferrari, Krishna Gummadi, and Andreas Krause.
\newblock Fairness {{Behind}} a {{Veil}} of {{Ignorance}}: {{A Welfare
  Analysis}} for {{Automated Decision Making}}.
\newblock In \emph{Advances in {{Neural Information Processing Systems}}},
  volume~31. {Curran Associates, Inc.}, 2018.

\bibitem[Heidari et~al.(2019)Heidari, Loi, Gummadi, and
  Krause]{Heidari2019Moral}
Hoda Heidari, Michele Loi, Krishna~P. Gummadi, and Andreas Krause.
\newblock A {{Moral Framework}} for {{Understanding Fair ML}} through
  {{Economic Models}} of {{Equality}} of {{Opportunity}}.
\newblock In \emph{Proceedings of the {{Conference}} on {{Fairness}},
  {{Accountability}}, and {{Transparency}}}, {{FAT}}* '19, pages 181--190, {New
  York, NY, USA}, January 2019. {Association for Computing Machinery}.
\newblock ISBN 978-1-4503-6125-5.
\newblock \doi{10.1145/3287560.3287584}.

\bibitem[Hertweck et~al.(2021)Hertweck, Heitz, and Loi]{Hertweck2021Moral}
Corinna Hertweck, Christoph Heitz, and Michele Loi.
\newblock On the moral justification of statistical parity.
\newblock In \emph{Proceedings of the 2021 {{ACM Conference}} on {{Fairness}},
  {{Accountability}}, and {{Transparency}}}, pages 747--757, {Virtual Event
  Canada}, March 2021. {ACM}.
\newblock ISBN 978-1-4503-8309-7.
\newblock \doi{10.1145/3442188.3445936}.

\bibitem[Kearns and Roth(2019)]{Kearns2019Ethical}
Michael Kearns and Aaron Roth.
\newblock \emph{The Ethical Algorithm: The Science of Socially Aware Algorithm
  Design}.
\newblock {Oxford University Press}, October 2019.
\newblock ISBN 978-0-19-094822-1.

\bibitem[Kearns et~al.(2018)Kearns, Neel, Roth, and Wu]{Kearns2018Preventing}
Michael Kearns, Seth Neel, Aaron Roth, and Zhiwei~Steven Wu.
\newblock Preventing fairness gerrymandering: Auditing and learning for
  subgroup fairness.
\newblock In \emph{Proceedings of the 35th {{International Conference}} on
  {{Machine Learning}}}, pages 2564--2572. {PMLR}, July 2018.

\bibitem[Kemin and Doyle(1998)]{Kemin1998Essentials}
Zhou Kemin and John~Comstock Doyle.
\newblock \emph{Essentials of Robust Control}, volume 104.
\newblock {Prentice Hall}, 1998.

\bibitem[Kleinberg et~al.(2015)Kleinberg, Ludwig, Mullainathan, and
  Obermeyer]{Kleinberg2015Prediction}
Jon Kleinberg, Jens Ludwig, Sendhil Mullainathan, and Ziad Obermeyer.
\newblock Prediction {{Policy Problems}}.
\newblock \emph{American Economic Review}, 105\penalty0 (5):\penalty0 491--495,
  May 2015.
\newblock ISSN 0002-8282.
\newblock \doi{10.1257/aer.p20151023}.

\bibitem[Kleinberg et~al.(2016)Kleinberg, Mullainathan, and
  Raghavan]{Kleinberg2016Inherent}
Jon Kleinberg, Sendhil Mullainathan, and Manish Raghavan.
\newblock Inherent trade-offs in the fair determination of risk scores,
  November 2016.

\bibitem[Lahoti et~al.(2020)Lahoti, Beutel, Chen, Lee, Prost, Thain, Wang, and
  Chi]{Lahoti2020Fairness}
Preethi Lahoti, Alex Beutel, Jilin Chen, Kang Lee, Flavien Prost, Nithum Thain,
  Xuezhi Wang, and Ed~H Chi.
\newblock Fairness without demographics through adversarially reweighted
  learning.
\newblock In \emph{Proceedings of the 2020 {{Conference}} on {{Neural
  Information Processing Systems}}}, 2020.

\bibitem[Li et~al.(2021{\natexlab{a}})Li, Namkoong, and Xia]{Li2021Evaluating}
Mike Li, Hongseok Namkoong, and Shangzhou Xia.
\newblock Evaluating model performance under worst-case subpopulations.
\newblock In \emph{Proceedings of the 2021 {{Conference}} on {{Neural
  Information Processing Systems}}}, 2021{\natexlab{a}}.

\bibitem[Li et~al.(2021{\natexlab{b}})Li, Beirami, Sanjabi, and
  Smith]{Li2021Tilted}
Tian Li, Ahmad Beirami, Maziar Sanjabi, and Virginia Smith.
\newblock Tilted empirical risk minimization, March 2021{\natexlab{b}}.

\bibitem[Liang et~al.(2022)Liang, Lu, and Mu]{Liang2022Algorithmic}
Annie Liang, Jay Lu, and Xiaosheng Mu.
\newblock Algorithmic design: Fairness versus accuracy.
\newblock In \emph{Proceedings of the 23rd {{ACM Conference}} on {{Economics}}
  and {{Computation}}}, {{EC}} '22, pages 58--59, {New York, NY, USA}, July
  2022. {Association for Computing Machinery}.
\newblock ISBN 978-1-4503-9150-4.
\newblock \doi{10.1145/3490486.3538237}.

\bibitem[Little et~al.(2022)Little, Weylandt, and Allen]{Little2022Fairness}
Camille~Olivia Little, Michael Weylandt, and Genevera~I. Allen.
\newblock To the fairness frontier and beyond: Identifying, quantifying, and
  optimizing the fairness-accuracy pareto frontier, May 2022.

\bibitem[Liu et~al.(2021)Liu, Shafi, Fleisher, {Eliassi-Rad}, and
  Alfeld]{Liu2021RAWLSNET}
David Liu, Zohair Shafi, William Fleisher, Tina {Eliassi-Rad}, and Scott
  Alfeld.
\newblock {{RAWLSNET}}: {{Altering}} bayesian networks to encode rawlsian fair
  equality of opportunity.
\newblock In \emph{Proceedings of the 2021 {{AAAI}}/{{ACM Conference}} on
  {{AI}}, {{Ethics}}, and {{Society}}}, {{AIES}} '21, pages 745--755, {New
  York, NY, USA}, July 2021. {Association for Computing Machinery}.
\newblock ISBN 978-1-4503-8473-5.
\newblock \doi{10.1145/3461702.3462618}.

\bibitem[Liu et~al.(2018)Liu, Dean, Rolf, Simchowitz, and
  Hardt]{Liu2018Delayed}
Lydia~T. Liu, Sarah Dean, Esther Rolf, Max Simchowitz, and Moritz Hardt.
\newblock Delayed {{Impact}} of {{Fair Machine Learning}}.
\newblock In \emph{Proceedings of the 35th {{International Conference}} on
  {{Machine Learning}}}, pages 3150--3158. {PMLR}, July 2018.

\bibitem[Martinez et~al.(2020)Martinez, Bertran, and
  Sapiro]{Martinez2020Minimax}
Natalia Martinez, Martin Bertran, and Guillermo Sapiro.
\newblock Minimax pareto fairness: A multi objective perspective.
\newblock In \emph{Proceedings of the 2020 {{International Conference}} on
  {{Machine Learning}}}, 2020.

\bibitem[Maskin(1978)]{Maskin1978Theorem}
E.~Maskin.
\newblock A theorem on utilitarianism.
\newblock \emph{The Review of Economic Studies}, 45\penalty0 (1):\penalty0
  93--96, February 1978.
\newblock ISSN 0034-6527, 1467-937X.
\newblock \doi{10.2307/2297086}.

\bibitem[Maso(2012)]{Maso2012Introduction}
Gianni~Dal Maso.
\newblock \emph{An Introduction to {{$\Gamma$-convergence}}}.
\newblock {Springer Science \& Business Media}, December 2012.
\newblock ISBN 978-1-4612-0327-8.

\bibitem[Mill(2008)]{Mill2008Utilitarianism}
John~Stuart Mill.
\newblock Utilitarianism.
\newblock In \emph{Seven {{Masterpieces}} of {{Philosophy}}}. {Routledge},
  2008.
\newblock ISBN 978-1-315-50881-8.

\bibitem[Moro et~al.(2014)Moro, Cortez, and Rita]{Moro2014Data}
S{\'e}rgio Moro, Paulo Cortez, and Paulo Rita.
\newblock A data-driven approach to predict the success of bank telemarketing.
\newblock \emph{Decision Support Systems}, 62:\penalty0 22--31, 2014.

\bibitem[Narayanan(2018)]{Narayanan2018Translation}
Arvind Narayanan.
\newblock Translation tutorial: 21 fairness definitions and their politics.
\newblock In \emph{Proceedings of the 2018 {{ACM Conference}} on {{Fairness}},
  {{Accountability}}, and {{Transparency}}}, 2018.

\bibitem[{Noriega-Campero} et~al.(2020){Noriega-Campero}, {Garcia-Bulle},
  Cantu, Bakker, Tejerina, and Pentland]{Noriega-Campero2020Algorithmic}
Alejandro {Noriega-Campero}, Bernardo {Garcia-Bulle}, Luis~Fernando Cantu,
  Michiel~A. Bakker, Luis Tejerina, and Alex Pentland.
\newblock Algorithmic targeting of social policies: Fairness, accuracy, and
  distributed governance.
\newblock In \emph{Proceedings of the 2020 {{Conference}} on {{Fairness}},
  {{Accountability}}, and {{Transparency}}}, {{FAT}}* '20, pages 241--251, {New
  York, NY, USA}, January 2020. {Association for Computing Machinery}.
\newblock ISBN 978-1-4503-6936-7.
\newblock \doi{10.1145/3351095.3375784}.

\bibitem[O'Neil(2017)]{ONeil2017Weapons}
Cathy O'Neil.
\newblock \emph{Weapons of {{Math Destruction}}: {{How Big Data Increases
  Inequality}} and {{Threatens Democracy}}}.
\newblock {Crown}, September 2017.
\newblock ISBN 978-0-553-41883-5.

\bibitem[Papadaki et~al.(2022)Papadaki, Martinez, Bertran, Sapiro, and
  Rodrigues]{Papadaki2022Minimax}
Afroditi Papadaki, Natalia Martinez, Martin Bertran, Guillermo Sapiro, and
  Miguel Rodrigues.
\newblock Minimax demographic group fairness in federated learning.
\newblock In \emph{Proceedings of the 2022 {{ACM Conference}} on {{Fairness}},
  {{Accountability}}, and {{Transparency}}}, pages 142--159, June 2022.
\newblock \doi{10.1145/3531146.3533081}.

\bibitem[ProPublica()]{ProPublicaCOMPAS}
ProPublica.
\newblock {{COMPAS Recidivism Risk Score Data}} and {{Analysis}}.
\newblock
  https://www.propublica.org/datastore/dataset/compas-recidivism-risk-score-data-and-analysis.

\bibitem[Rambachan et~al.(2020)Rambachan, Kleinberg, Ludwig, and
  Mullainathan]{Rambachan2020Economic}
Ashesh Rambachan, Jon Kleinberg, Jens Ludwig, and Sendhil Mullainathan.
\newblock An economic perspective on algorithmic fairness.
\newblock \emph{AEA Papers and Proceedings}, 110:\penalty0 91--95, May 2020.
\newblock ISSN 2574-0768, 2574-0776.
\newblock \doi{10.1257/pandp.20201036}.

\bibitem[Rambachan et~al.(2021)Rambachan, Kleinberg, Mullainathan, and
  Ludwig]{Rambachan2021Economic}
Ashesh Rambachan, Jon Kleinberg, Sendhil Mullainathan, and Jens Ludwig.
\newblock An economic approach to regulating algorithms, 2021.

\bibitem[Rawls(2003)]{Rawls2003Theory}
John Rawls.
\newblock A {{Theory}} of {{Justice}}.
\newblock In \emph{Ethics: {{Contemporary Readings}}}. {Routledge}, 2003.
\newblock ISBN 978-0-203-49566-7.

\bibitem[Redmond and Baveja(2002)]{Redmond2002Data}
Michael Redmond and Alok Baveja.
\newblock A data-driven software tool for enabling cooperative information
  sharing among police departments.
\newblock \emph{European Journal of Operational Research}, 141\penalty0
  (3):\penalty0 660--678, 2002.

\bibitem[Roberts(1980)]{Roberts1980Interpersonal}
Kevin W.~S. Roberts.
\newblock Interpersonal comparability and social choice theory.
\newblock \emph{The Review of Economic Studies}, 47\penalty0 (2):\penalty0 421,
  January 1980.
\newblock ISSN 00346527.
\newblock \doi{10.2307/2297002}.

\bibitem[Rudin(1976)]{Rudin1976Principles}
Walter Rudin.
\newblock \emph{Principles of Mathematical Analysis}.
\newblock International Series in Pure and Applied Mathematics. {McGraw-Hill},
  {New York}, 3d ed edition, 1976.
\newblock ISBN 978-0-07-054235-8.

\bibitem[Sagawa et~al.(2020)Sagawa, Koh, Hashimoto, and
  Liang]{Sagawa2020Distributionally}
Shiori Sagawa, Pang~Wei Koh, Tatsunori~B. Hashimoto, and Percy Liang.
\newblock Distributionally robust neural networks for group shifts: On the
  importance of regularization for worst-case generalization.
\newblock In \emph{{{ICLR}}}, April 2020.

\bibitem[Sen(1976)]{Sen1976Welfare}
Amartya Sen.
\newblock Welfare inequalities and rawlsian axiomatics.
\newblock \emph{Theory and Decision}, 7, 1976.

\bibitem[Sen(1979)]{Sen1979Equality}
Amartya Sen.
\newblock Equality of what?, 1979.

\bibitem[Sen(1999)]{Sen1999Commodities}
Amartya Sen.
\newblock Commodities and {{Capabilities}}.
\newblock \emph{OUP Catalogue}, 1999.

\bibitem[{Sharifi-Malvajerdi} et~al.(2019){Sharifi-Malvajerdi}, Kearns, and
  Roth]{Sharifi-Malvajerdi2019Average}
Saeed {Sharifi-Malvajerdi}, Michael Kearns, and Aaron Roth.
\newblock Average individual fairness: Algorithms, generalization and
  experiments.
\newblock In \emph{Advances in {{Neural Information Processing Systems}}},
  volume~32. {Curran Associates, Inc.}, 2019.

\bibitem[Yang et~al.(2022)Yang, Ko, Varshney, and Ying]{Yang2022Minimax}
Zhenhuan Yang, Yan~Lok Ko, Kush~R. Varshney, and Yiming Ying.
\newblock Minimax {{AUC}} fairness: Efficient algorithm with provable
  convergence, November 2022.

\bibitem[Yeh and Lien(2009)]{Yeh2009Comparisons}
I-Cheng Yeh and Che-hui Lien.
\newblock The comparisons of data mining techniques for the predictive accuracy
  of probability of default of credit card clients.
\newblock \emph{Expert systems with applications}, 36\penalty0 (2):\penalty0
  2473--2480, 2009.

\end{thebibliography}

\newpage
\appendix

\section{Additional Figures}
\label{app:figures}

For conciseness, Section~\ref{sec:experiments} on increasing model complexity only shows the results for a few datasets. In this appendix, we include the remaining figures, along with other interesting plots.

\begin{figure}[!h]
	\begin{subfigure}{\linewidth}
		\centering
		\includegraphics[width = 0.95\linewidth]{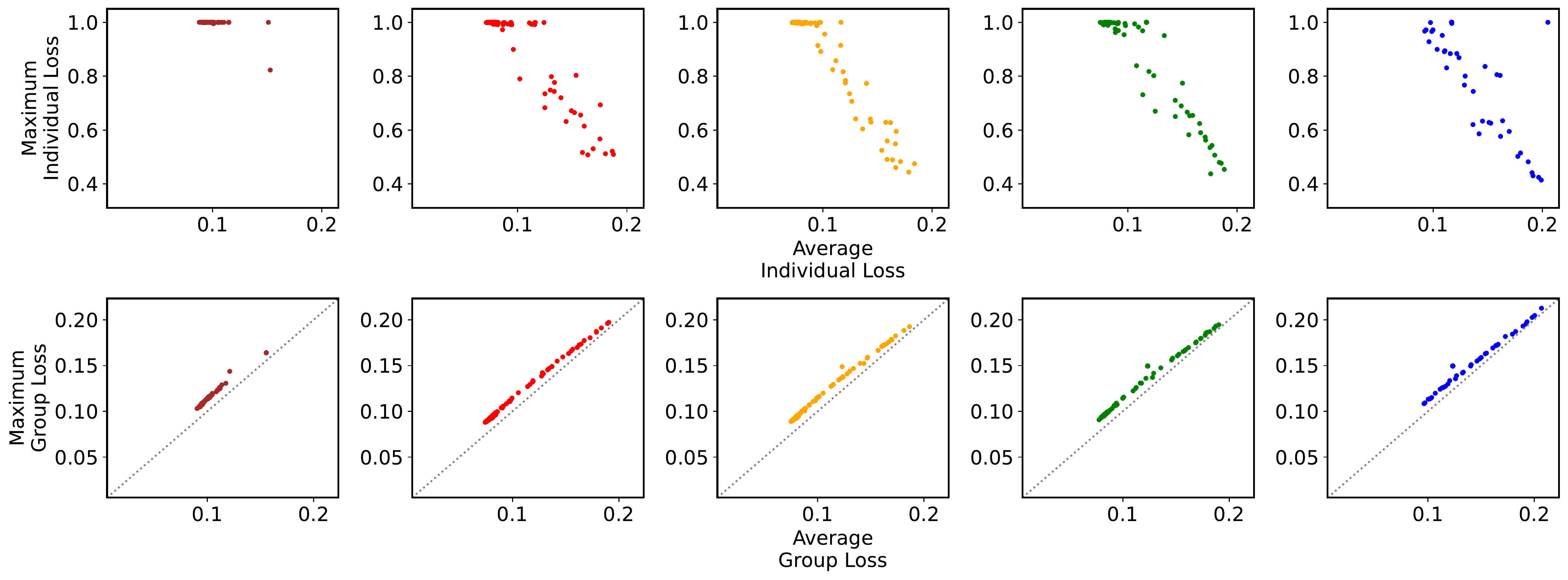}
	\end{subfigure}
	\captionsetup{font=small, width=0.95\linewidth, justification=justified}
	\caption{Result of increasing model complexity for the Bank Marketing dataset. From left to right: Logistic Regression, 1-Layer Neural Network,..., 4-Layer Neural Network. The top panel plots average vs maximum individual loss. The bottom panel plots average vs maximum group loss, and includes for reference the identity line in gray. In this dataset, groups were constructed based on an individual's marital status.}
\end{figure}

\begin{figure}[!h]
	\begin{subfigure}{\linewidth}
		\centering
		\includegraphics[width = 0.95\linewidth]{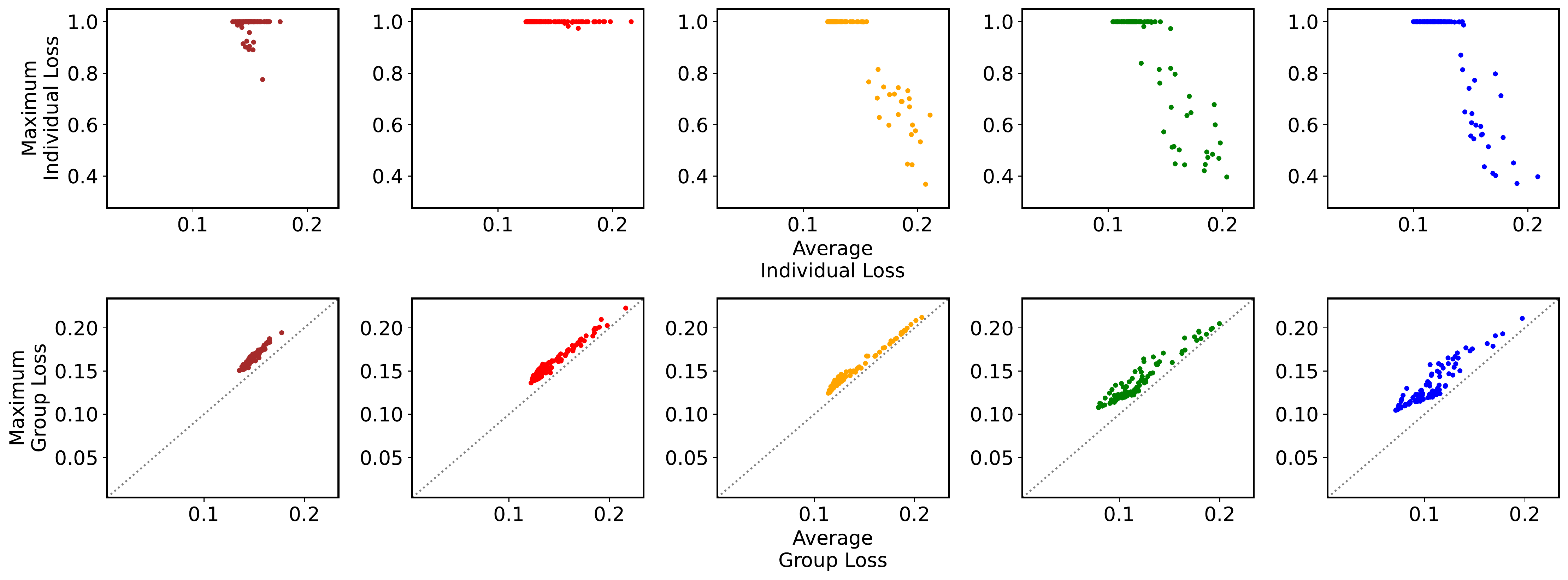}
	\end{subfigure}
	\captionsetup{font=small, width=0.95\linewidth, justification=justified}
	\caption{Result of increasing model complexity for the Adult Income dataset. From left to right: Logistic Regression, 1-Layer Neural Network,..., 4-Layer Neural Network. The top panel plots average vs maximum individual loss. The bottom panel plots average vs maximum group loss, and includes for reference the identity line in gray. In this dataset, groups were constructed based on an individual's race.}
\end{figure}

\begin{figure}[!h]
	\begin{subfigure}{\linewidth}
		\centering
		\includegraphics[width = 0.95\linewidth]{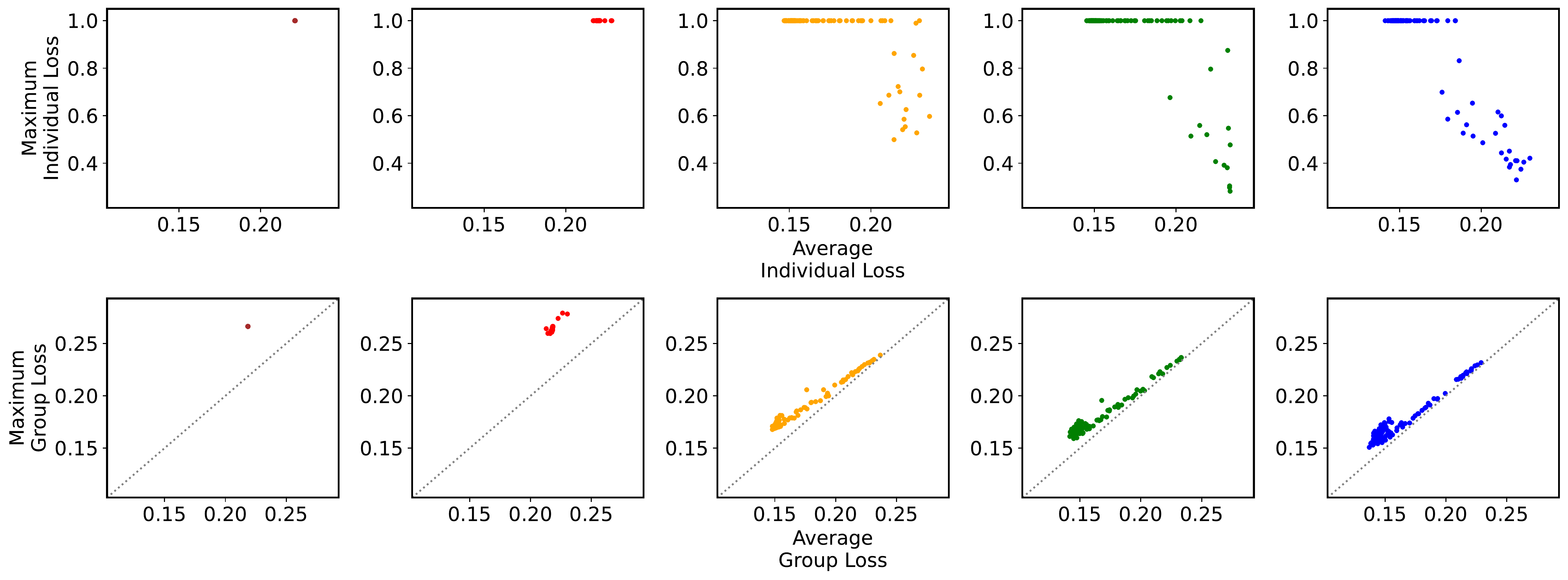}
	\end{subfigure}
	\captionsetup{font=small, width=0.95\linewidth, justification=justified}
	\caption{Result of increasing model complexity for the Credit Card Default dataset. From left to right: Logistic Regression, 1-Layer Neural Network,..., 4-Layer Neural Network. The top panel plots average vs maximum individual loss. The bottom panel plots average vs maximum group loss, and includes for reference the identity line in gray. In this dataset, groups were constructed based on an individual's marital status.}
\end{figure}

\begin{figure}[!h]
	\begin{subfigure}{\linewidth}
		\centering
		\includegraphics[width = 0.95\linewidth]{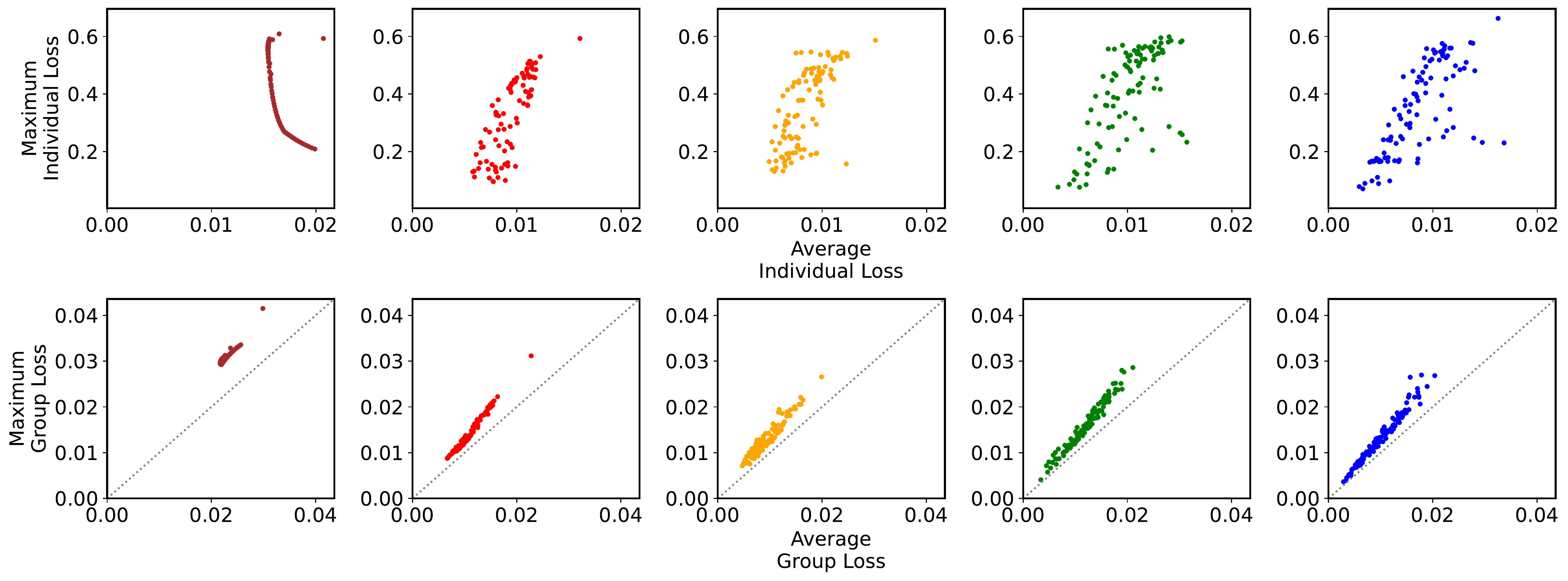}
	\end{subfigure}
	\captionsetup{font=small, width=0.95\linewidth, justification=justified}
	\caption{Result of increasing model complexity for the Communities \& Crime dataset. From left to right: Logistic Regression, 1-Layer Neural Network,..., 4-Layer Neural Network. The top panel plots average vs maximum individual loss. The bottom panel plots average vs maximum group loss, and includes for reference the identity line in gray. In this dataset, groups were constructed based on quartiles of a county's poverty percentage.}
	\label{fig:c&c}
\end{figure}

\pagebreak

\section{Proofs}
\label{app:pfs}

\subsection{Proof of Proposition \ref{prop:pointwise_convergence}}

\begin{proof}
	For convenience of notation, recall that we write $\ell_i(\theta) = \ell(f_\theta(x_i), y_i)$.
	
	First, the limit for $\lambda\to0$ is shown. Taking the limit of the expression directly yields the indeterminate form $\frac{0}{0}$, and applying L'H\^{o}pital's Rule gives:
	$$\lim_{\lambda \to 0} \sum_i \frac{e^{\lambda \ell_i(\theta)} }{\sum_j e^{\lambda \ell_j(\theta)}} \ell_i(\theta),$$
	from which the desired result immediately appears.
	
	\vspace{1em}
	
	When taking the limit $\lambda \to \infty$, another indeterminate form appears, so we again begin with:
	$$\lim_{\lambda \to \infty} \sum_i \frac{e^{\lambda \ell_i(\theta)} }{\sum_j e^{\lambda \ell_j(\theta)}} \ell_i(\theta).$$
	Observe that this is a weighted average, where the $i$-th weight is $\left( \sum_j e^{\lambda (\ell_j(\theta) - \ell_i(\theta)) } \right)^{-1}$. Let us define the set of maximizers $\I = \argmax_i \ell_i(\theta)$, satisfyin $\ell_i(\theta) = \ell_*(\theta)$ for all $i\in \I$. For any $i\notin \I$, notice that $\left( \sum_j e^{\lambda (\ell_j(\theta) - \ell_i(\theta)) } \right) \le e^{-\lambda (\ell_*(\theta) - \ell_i(\theta))}$. The right-hand side converges to zero as $\lambda\to\infty$, and since all weights are lower bounded by zero, this upper bound is tight. The desired limit reduces to
	$$\lim_{\lambda\to\infty} \sum_{i\in \I} \frac{e^{\lambda \ell_i(\theta)}}{\sum_j e^{\lambda \ell_j(\theta)}}   \ell_i(\theta).$$
	By a similar argument, it is possible to see that for all $i\in \I$, $\frac{e^{\lambda \ell_i(\theta)}}{\sum_j e^{\lambda \ell_j(\theta)}} \underset{\lambda\to\infty}{\to} \frac{1}{|\I|}$, and hence:
	$$\lim_{\lambda\to\infty} \sum_{i\in \I} \frac{e^{\lambda \ell_i(\theta)}}{\sum_j e^{\lambda \ell_j(\theta)}}   \ell_i(\theta) = \sum_{i\in \I} \frac{1}{|\I|} \ell_i(\theta)  = \ell^*(\theta)$$
	as desired.	
\end{proof}

\subsection{Proof of Theorem \ref{thm:argmin_convergence}}

\begin{proof}	
	It is useful to define $\L = \{(\ell(f_\theta(x_1), y_1),...,\ell(f_\theta(x_n), y_n)),\, \forall \theta \in \Theta\}$ as the space of all feasible loss profiles. To simplify notation, we write $\ell = (\ell_1,..,,\ell_n)$ to denote an element of $\L$. Since the image of $\Theta$ under $f_\theta(x)$ is assumed to be compact for every $x$, then continuity of $\ell(\cdot, \cdot)$ and compactness of $\Y$ implies compactness of $\L$.

	First, notice that:
	\begin{equation}
		\label{eq:argmin_over_L}
		\argmin_{\ell \in \L} \frac{1}{\lambda} \log \left( \frac{1}{n} \sum_i e^{\lambda \ell_i}\right)  
		=  \argmin_{\ell \in \L} \frac{1}{\lambda n} \sum_i \left(e^{\lambda \ell_i}- 1\right)
	\end{equation}
	for every $\lambda$. In particular, the minimizer of \eqref{eq:argmin_over_L} equals $(\ell(f_{\htl}(x_1),y_1),...,\ell(f_{\htl}(x_n),y_n))$, for $\htl$ solving problem \eqref{opt:continuum}. We define $F_\lambda(\ell) = \frac{1}{\lambda n} \sum_i \left(e^{\lambda \ell_i}- 1\right)$. Differentiating $F_\lambda(\ell)$ with respect to $\lambda$ gives:
	\begin{equation}
		\frac{\partial}{\partial \lambda} F_\lambda(\ell)
		= \frac{\sum_i e^{\lambda \ell_i} \left(e^{-\lambda \ell_i} - (1-\lambda \ell_i) \right)}{\lambda^2 n} \ge 0,
	\end{equation}
	so this sequence of functions is monotone in $\lambda$. In addition, it is easy to show that for any $\ell \in \L$, $\lim_{\lambda \to 0} F_\lambda(\ell) = \frac{1}{n} \sum_i \ell_i$. Since this pointwise convergence holds for a monotone sequence of functions on a compact set, the convergence is uniform in $\L$ \citep[Theorem 7.13]{Rudin1976Principles}:
	\begin{equation}
		F_\lambda(\ell) \overset{\text{unif. in }\L}{\underset{\lambda \to 0}{\to}} \frac{1}{n} \sum_i \ell_i.
	\end{equation}
	Furthermore, the limiting function is continuous in $\ell$, so it follows that this sequence also $\Gamma$-converges in $\L$ (see Theorem 2.1 in \citet{Braides2006Handbook} or Proposition 5.2 in \citet{Maso2012Introduction}). $\Gamma$-convergence can be used to prove that the sequence of minimizers of $F_\lambda$ converges to a minimizer of its $\Gamma$-limit (see Theorem 2.10 in \citet{Braides2006Handbook} or Corollary 7.20 in \citet{Maso2012Introduction}). To apply these results, it is necessary to establish one additional condition on the sequence $\{F_\lambda\}$.  
	
	We say that $\{F_\lambda(\cdot)\}_{\lambda > 0}$ is equi-coercive on $\L$ if for all $t \in \R$ there exists a compact set $K_t$ for which $\{F_\lambda \le t\} \subset K_t$ for all $\lambda$. Since $F_\lambda \ge \frac{1}{n} \sum_i \ell_i$, and the latter has compact sub-level sets on $\L$, then indeed $\{F_\lambda(\cdot)\}_{\lambda > 0}$ is equi-coercive.
	
	Together, equi-coercivity and $\Gamma$-convergence imply that the limit of $\{\hat\ell_\lambda\}_{\lambda > 0}$, the sequence of minimizers to \eqref{eq:argmin_over_L}, is a minimizer to the $\Gamma$-limit of $F_\lambda$. Namely:
	\begin{equation}
		\label{eq:lim_argmin_l_0}
		\lim_{\lambda \to 0} \argmin_{\ell \in \L} F_\lambda(\ell) \in \argmin_{\ell \in \L} \frac{1}{n} \sum_i \ell_i.
	\end{equation}
	To obtain the desired result, it is only necessary to rewrite the optimization problems in terms of $\theta$ and $\Theta$.
	
	Of course, if the minimizer on the right-hand side is unique, then the argmax in \eqref{eq:lim_argmin_l_0} contains only a single value, and it must be that $\lim_{\lambda \to 0} \htl = \htu$.
	
	\vspace{1em}
	
	The proof for taking the limit as $\lambda \to \infty$ is nearly identical. We include its outline here.
	
	Consider now the sequence of functions $G_\lambda(\ell) = \frac{1}{\lambda} \log \left(\sum_i e^{\lambda \ell_i} \right)$. Observe that $G_\lambda$ converges pointwise to $\max_i \ell_i$. Taking a derivative with respect to $\lambda$ gives:
	\begin{equation}
		\frac{\partial}{\partial \lambda} G_\lambda(\ell) 
		= \frac{\lambda \sum_i \frac{e^{\lambda \ell_i} }{\sum_j e^{\lambda \ell_j}} \ell_i   - \log \left( \sum_i e^{\lambda \ell_i} \right)}{\lambda^2} 
		\le \frac{\lambda \max_i \ell_i - \log(\max_i e^{\lambda \ell_i})}{\lambda^2} = 0,
	\end{equation}
	so again this sequence is monotone. Identical arguments imply that $G_\lambda(\ell) \overset{\Gamma}{\underset{\lambda\to \infty}{\to}} \max_i \ell_i$. We can similarly use this sequence's $\Gamma$-limit to construct compact sub-level sets and prove equi-coercivity. So, we obtain
	\begin{equation}
		\lim_{\lambda \to \infty} \argmin_{\ell \in \L} G_\lambda(\ell) \in \argmin_{\ell \in \L} \max_i \ell_i,
	\end{equation}
	and conclude as before.
\end{proof}

\subsection{Proof of Proposition \ref{prop:identifible}}

\begin{proof}
	First, plug in the assumption on $\ell$ and $\epsilon_i = y - f_{\theta^*}(x_i)$. Taking the gradient with respect to $\theta$ we have:
	$$\E\left[\nabla_\theta L(\theta^*; \lambda, X, Y)\right] = \E \left[ \sum_i \frac{e^{\lambda g(\epsilon_i)}}{\sum_j e^{\lambda g(\epsilon_i)}} g'(\epsilon_i) \left(- \nabla f_{\theta^*}(X_i)\right)  \right].$$
	By the tower property we can obtain
	$$\E\left[\nabla_\theta L(\theta^*; \lambda, X, Y)\right] = \E \left[ \sum_i \frac{e^{\lambda g(\epsilon_i)}}{\sum_j e^{\lambda g(\epsilon_i)}}  \left(- \nabla f_{\theta^*}(x_i)\right)  \E\left[g'(\epsilon_i) | g(\epsilon_1),...,g(\epsilon_n), X_i \right] \right].$$
	Recall that $\epsilon_i$ is independent of all $g(\epsilon_j), \ j\neq i$, but $g'(\epsilon_i)$ cannot be pulled out since $g^{-1}$ is not uniquely defined. However, since $g^{-1}(r)$ can only take two values, then $\E\left[g'(\epsilon_i) | g(\epsilon_i) = r,  X_i \right] = 0$ if and only if 
	\begin{equation}
		\label{eq:identifiability_condition}
		g'\left(g^{-1}_{(+)}(r)\right) f_\epsilon\left(g^{-1}_{(+)}(r)\right) + g'\left(g^{-1}_{(-)}(r)\right) f_\epsilon\left(g^{-1}_{(-)}(r)\right) = 0,
	\end{equation}
	for all $r \in \mathrm{Range}(g)$, where we used the notation introduced in the Proposition.
\end{proof}

\subsection{Proof of Proposition \ref{prop:GD_convergence}}

\begin{proof}
	Using Lemma 3 from \cite{Li2021Tilted}, we have:
	\begin{equation}
		\label{eq:hessian_L}
		\begin{split}
			\nabla^2_\theta L(\theta; \lambda) = & \sum_i  \lambda e^{\lambda (\ell(f_\theta(x_i), y_i) - L(\theta; \lambda))}  \big(\nabla_{\theta}\ell(f_\theta(x_i), y_i) - \nabla_\theta L(\theta; \lambda)\big)\big(\nabla_\theta \ell(f_\theta(x_i), y_i) - \nabla_\theta L(\theta; \lambda)\big)^T  \\
			& \qquad+ e^{\lambda (\ell(f_\theta(x_i), y_i) - L(\theta; \lambda))}  \nabla^2_{\theta}\ell(f_\theta(x_i), y_i).
		\end{split}
	\end{equation}
	The largest eigenvalue of this matrix can be upper bounded by Weyl's inequality as follows:
	\begin{align*}
		\lambda_{max}\left(\nabla^2_\theta L(\theta; \lambda)\right) \le & \lambda_{max} \left(\sum_i  \lambda e^{\lambda (\ell(f_\theta(x_i), y_i) - L(\theta; \lambda))}  \big(\nabla_{\theta}\ell(f_\theta(x_i), y_i) - \nabla_\theta L(\theta; \lambda)\big)\big(\nabla_\theta \ell(f_\theta(x_i), y_i) - \nabla_\theta L(\theta; \lambda)\big)^T \right)  \\
		& \qquad+ \lambda_{max} \left(\sum_i  e^{\lambda (\ell(f_\theta(x_i), y_i) - L(\theta; \lambda))}  \nabla^2_{\theta}\ell(f_\theta(x_i), y_i) \right).
	\end{align*}
	The second term can be upper bounded by $C_{max}$, since we assumed that $\nabla_\theta^2 \ell(f_\theta(x),y) \preccurlyeq C_{max} I$ for all $x,y$ and $\theta$. The first can be controlled as follows:
	\begin{align*}
		\lambda_{max} \left(\big(\nabla_{\theta}\ell(f_\theta(x_i), y_i) - \nabla_\theta L(\theta; \lambda)\big)\big(\nabla_\theta \ell(f_\theta(x_i), y_i) - \nabla_\theta L(\theta; \lambda)\big)^T\right) &= || \nabla_{\theta}\ell(f_\theta(x_i), y_i) - \nabla_\theta L(\theta; \lambda) ||_2^2 \\
		&\le || \nabla_{\theta}\ell(f_\theta(x_i), y_i)||_2^2 + ||\nabla_\theta L(\theta; \lambda) ||_2^2 \\
		&\le 2 C,
	\end{align*}
	since we assumed that $||\nabla_\theta \ell||_2^2 \le C$, which itself implies that the norm of $\nabla_\theta L(\theta; \lambda)$ is bounded by the same quantity. Altogether, we arrive at:
	\begin{equation}
		\lambda_{max}\left( \nabla_\theta^2 L(\theta; \lambda) \right) \le C_{max} + 2C \lambda.
	\end{equation}
	
	By dropping the first term in \eqref{eq:hessian_L}, we can also obtain the lower bound of:
	\begin{equation}
		\nabla_\theta^2 L(\theta; \lambda) \succcurlyeq C_{min} I.
	\end{equation}
	Theorem 13 in \cite{Li2021Tilted} concludes.
\end{proof}

\end{document}